\newtheorem{clm}{Claim}[section]
\begin{document}
	
	\title{Abstract interpretation, Hoare logic, and incorrectness logic for quantum programs}     
	

\author{Yuan Feng}
\orcid{nnnn-nnnn-nnnn-nnnn}             
\affiliation{
	\position{Professor}
	\department{Centre for Quantum Software and Information}              
	\institution{University of Technology Sydney}            
	\state{NSW}
	\country{Australia}                    
}

\email{Yuan.Feng@uts.edu.au}          

\author{Sanjiang Li}
\orcid{nnnn-nnnn-nnnn-nnnn}             
\affiliation{
	\position{Professor}
	\department{Centre for Quantum Software and Information}              
	\institution{University of Technology Sydney}            
	\state{NSW}
	\country{Australia}                    
}

\email{Sanjiang.Li@uts.edu.au}          




	\begin{abstract}
			Abstract interpretation, Hoare logic, and incorrectness (or reverse Hoare) logic are powerful techniques for static analysis of computer programs. All of them have been successfully extended to the quantum setting, but largely developed in parallel. In this paper, we examine the relationship between these techniques in the context of verifying quantum while-programs, where the abstract domain and the set of assertions for quantum states are well-structured. In particular, we show that any complete quantum abstract interpretation induces a  quantum Hoare logic and a quantum incorrectness logic, both of which are sound and relatively complete. Unlike the logics proposed in the literature, the induced logic systems are in a forward manner, making them more useful in certain applications. 
			Conversely, any sound and relatively complete quantum Hoare logic or quantum incorrectness logic induces a complete quantum abstract interpretation. As an application, we are able to show the non-existence of any sound and relatively complete quantum Hoare logic or incorrectness logic if tuples of local subspaces are taken as assertions.
	\end{abstract}

\begin{CCSXML}
	<ccs2012>
	<concept>
	<concept_id>10003752.10010124.10010131.10010133</concept_id>
	<concept_desc>Theory of computation~Denotational semantics</concept_desc>
	<concept_significance>500</concept_significance>
	</concept>
	<concept>
	<concept_id>10003752.10010124.10010131.10010135</concept_id>
	<concept_desc>Theory of computation~Axiomatic semantics</concept_desc>
	<concept_significance>500</concept_significance>
	</concept>
	<concept>
	<concept_id>10003752.10003790.10011741</concept_id>
	<concept_desc>Theory of computation~Hoare logic</concept_desc>
	<concept_significance>500</concept_significance>
	</concept>
	<concept>
	<concept_id>10003752.10010124.10010138.10010142</concept_id>
	<concept_desc>Theory of computation~Program verification</concept_desc>
	<concept_significance>500</concept_significance>
	</concept>
	<concept>
	<concept_id>10003752.10010124.10010138.10010141</concept_id>
	<concept_desc>Theory of computation~Pre- and post-conditions</concept_desc>
	<concept_significance>500</concept_significance>
	</concept>
	<concept>
	<concept_id>10003752.10010124.10010138.10010144</concept_id>
	<concept_desc>Theory of computation~Assertions</concept_desc>
	<concept_significance>500</concept_significance>
	</concept>
	</ccs2012>
\end{CCSXML}

\ccsdesc[500]{Theory of computation~Denotational semantics}
\ccsdesc[500]{Theory of computation~Axiomatic semantics}
\ccsdesc[500]{Theory of computation~Hoare logic}
\ccsdesc[500]{Theory of computation~Program verification}
\ccsdesc[500]{Theory of computation~Assertions}

\keywords{Quantum programming, abstract interpretation, incorrectness logic}  

	\maketitle

\newcommand {\empstr} {\Lambda}

\newcommand {\qcf}[1] {{\sf{#1}}}

\newcommand {\qc}[1] {{\sf{#1}}}
\def\>{\ensuremath{\rangle}}
\def\<{\ensuremath{\langle}}
\def\sl {\ensuremath{\llparenthesis}}
\def\sr{\ensuremath{\rrparenthesis}}
\def\-{\ensuremath{\textrm{-}}}
\def\ott{t}
\def\otu{u}
\def\ots{s}
\def\apply{\mathrel{*\!\!=}}

\def\comm{\ensuremath{\leftrightarrow^*}}
\def\reach{\ensuremath{\rightarrow^*}}

\def\ctp{P}
\def\ctq{Q}

\def\change{\ensuremath{\mathit{change}}}

\def\qVar{\ensuremath{\mathit{qv}}}
\def\qv{\ensuremath{\mathit{qv}}}
\def\cVar{\ensuremath{\mathit{cv}}}
\def\QVar{\ensuremath{\mathit{V}}}
\def\CVar{\ensuremath{\mathit{cVar}}}
\def\Var{\ensuremath{\mathit{var}}}
\def\Chan{\mathit{chan}}
\def\cChan{\mathit{cChan}}
\def\qChan{\mathit{qChan}}
\def\BExp{\mathit{BExp}}
\def\Exp{\mathit{Exp}}

\def\fdmu{\Delta}
\def\fdnu{\dnu}
\def\fdomega{\domega}

\def\dmu{\mu}
\def\dnu{\nu}
\def\domega{\omega}
\def\expect{\mathbb{E}}
\def\preexpect{\mathrm{pre}\mathbb{E}}

\def\rassign{:=_{\$}}
\def\fpi{\widehat{\pi}}
\def\h{\ensuremath{\mathcal{H}}}
\def\p{\ensuremath{\mathcal{P}}}
\def\l{\ensuremath{\mathcal{L}}}
\def\g{\ensuremath{\mathcal{G}}}
\def\lh{\ensuremath{\mathcal{L(H)}}}
\def\dh{\ensuremath{\mathcal{D(H})}}
\def\dhv{\ensuremath{\d(\h_V)}}
\def\shv{\ensuremath{\s(\h_V)}}
\def\q{\bold Q}
\def\Q{\ensuremath{\mathbb Q}}
\def\P{\ensuremath{\mathbb P}}
\def\SO{\ensuremath{\mathcal{SO}}}
\def\HP{\ensuremath{\mathcal{HP}}}
\def\hpe{\ensuremath{\mathcal{\e}}}

\def\r{\ensuremath{\mathcal{R}}}
\def\R{\ensuremath{\mathbb{R}}}
\def\m{\ensuremath{\mathcal{M}}}
\def\u{\ensuremath{\mathcal{U}}}
\def\k{\ensuremath{\mathcal{K}}}
\def\K{\ensuremath{\mathfrak{K}}}
\def\S{\ensuremath{\mathfrak{S}}}
\def\s{\ensuremath{\mathcal{S}}}
\def\t{\ensuremath{\mathcal{T}}}
\def\u{\ensuremath{\mathcal{U}}}
\def\U{\ensuremath{\mathfrak{U}}}
\def\L{\ensuremath{\mathfrak{L}}}
\def\x{\ensuremath{\mathcal{X}}}
\def\y{\ensuremath{\mathcal{Y}}}
\def\z{\ensuremath{\mathcal{Z}}}
\def\v{\ensuremath{\mathcal{V}}}

\def\st{\ensuremath{\mathfrak{t}}}
\def\su{\ensuremath{\mathfrak{u}}}
\def\ss{\ensuremath{\mathfrak{s}}}

\def\ra{\ensuremath{\rightarrow}}
\def\a{\ensuremath{\mathcal{A}}}
\def\b{\ensuremath{\mathcal{B}}}
\def\c{\ensuremath{\mathcal{C}}}

\def\e{\ensuremath{\mathcal{E}}}
\def\f{\ensuremath{\mathcal{F}}}
\def\l{\ensuremath{\mathcal{L}}}
\def\X{\mbox{\bf{X}}}
\def\N{\mathbb{N}}
\def\sreal{\mathbb{R}}
\def\Z{\mathbb{Z}}

\def\qzz{\ensuremath{|0\>_q\<0|}}
\def\qoo{\ensuremath{|1\>_q\<1|}}
\def\qzo{\ensuremath{|0\>_q\<1|}}
\def\qoz{\ensuremath{|1\>_q\<0|}}
\def\qii{\ensuremath{|i\>_q\<i|}}
\def\qiz{\ensuremath{|i\>_q\<0|}}
\def\qzi{\ensuremath{|0\>_q\<i|}}

\def\quzz{\ensuremath{|0\>_{\bar{q}}\<0|}}
\def\quoo{\ensuremath{|1\>_{\bar{q}}\<1|}}
\def\quzo{\ensuremath{|0\>_{\bar{q}}\<1|}}
\def\quoz{\ensuremath{|1\>_{\bar{q}}\<0|}}
\def\quii{\ensuremath{|i\>_{\bar{q}}\<i|}}
\def\quiz{\ensuremath{|i\>_{\bar{q}}\<0|}}
\def\quzi{\ensuremath{|0\>_{\bar{q}}\<i|}}

\DeclarePairedDelimiter{\ceil}{\lceil}{\rceil}

\def\d{\ensuremath{\mathcal{D}}}
\def\dh{\ensuremath{\mathcal{D(H)}}}
\def\lh{\ensuremath{\mathcal{L(H)}}}
\def\le{\ensuremath{\sqsubseteq}}
\def\ge{\ensuremath{\sqsupseteq}}
\def\eval{\ensuremath{{\psi}}}
\def\aeq{\ensuremath{{\ \equiv\ }}}
\def\osnt{\ensuremath{\sl \ott, \e\sr}}
\def\snt{\st}
\def\snti{\ensuremath{\sl \ott_i, \e_i\sr}}
\def\osnu{\ensuremath{\sl \otu, \f\sr}}
\def\osns{\ensuremath{\sl s, \g\sr}}
\def\snu{\su}
\def\fdist{\ensuremath{\d ist_\h}}
\def\dist{\ensuremath{Dist}}
\def\wtx{\ensuremath{\widetilde{X}}}

\def\bv{1{v}}
\def\bV{\mathbf{V}}
\def\bf{\mathbf{f}}
\def\bw{\mathbf{w}}
\def\zo{\mathbf{0}}
\def\bX{\mathbf{X}}
\def\bDelta{\mathbf{\Delta}}
\def\bdelta{\boldsymbol{\delta}}
\def\next{\mathcal{X}}
\def\until{\mathcal{U}}

\def\leqI{\ensuremath{\mathcal{SI}(\h)}}
\def\leqIq{\ensuremath{\mathcal{SI}_{\eqsim}(\h)}}
\def\oact{\ensuremath{\alpha}}
\def\oactb{\ensuremath{\beta}}
\def\sact{\ensuremath{\gamma}}
\def\fpi{\ensuremath{\widehat{\pi}}}
\newcommand{\supp}[1]{\ensuremath{\left\lceil{#1}\right\rceil}}
\newcommand{\support}[1]{\lceil{#1}\rceil}

\newcommand{\abis}{\stackrel{\lambda}\approx}
\newcommand{\abisa}[1]{\stackrel{#1}\approx}
\newcommand {\qbit} {\mbox{\bf{new}}}

\renewcommand{\theenumi}{(\arabic{enumi})}
\renewcommand{\labelenumi}{\theenumi}
\newcommand{\tr}{{\rm Tr}}
\newcommand{\rto}[1]{\stackrel{#1}\rightarrow}
\newcommand{\orto}[1]{\stackrel{#1}\longrightarrow}
\newcommand{\srto}[1]{\stackrel{#1}\longmapsto}
\newcommand{\sRto}[1]{\stackrel{#1}\Longmapsto}

\newcommand{\ass}[3]{\{#1\}\ #2\ \{#3\}}
\newcommand{\iass}[3]{[#1]\ #2\ [#3]}

\newcommand{\andor}{\ \&\ }

\newcommand {\true} {\ensuremath{{\mathbf{true}}}}
\newcommand {\false} {\ensuremath{{\mathbf{false}}}}
\newcommand {\abort}{\ensuremath{{\mathbf{abort}}}}
\newcommand {\sskip} {\mathbf{skip}}

\newcommand {\then} {\ensuremath{\mathbf{then}}}
\newcommand {\eelse} {\ensuremath{\mathbf{else}}}
\newcommand {\while} {\ensuremath{\mathbf{while}}}
\newcommand {\ddo} {\ensuremath{\mathbf{do}}}
\newcommand {\pend} {\ensuremath{\mathbf{end}}}
\newcommand {\inv} {\ensuremath{\mathbf{inv}}}

\newcommand {\mymeas} {\mathbf{meas}}

\newcommand \assert[1] {\mathbf{assert}\ #1}
\newcommand {\fail} {\mathbf{fail}}
\newcommand {\iif} {\mathbf{if}}
\newcommand {\fii} {\mathbf{fi}}
\newcommand {\od} {\mathbf{od}}
\def\mstm{\iif\ b\ \then\ S_1\ \eelse\ S_0\ \pend}
\def\wstm{\while\ b\ \ddo\ S\ \pend}

\newcommand\pmeasstm[3]{\iif\ #1\ \then\ #2\ \eelse\ #3\ \pend}
\def\pmstm{\iif\ P[\bar{q}] \ \then\ S_1\ \eelse\ S_0\ \pend}
\def\pwstm{\while\ P[\bar{q}] \ \ddo\ S\ \pend}

\newcommand\measstm[3]{\iif\ #1\ra\ #2\ \square\ \neg (#1)\ra #3\ \fii}

\newcommand\whilestm[2]{\while\ #1\ \ddo\ #2\ \pend}

\newcommand\alterex{\iif\ B_1\ra S_1 \square\ldots\square B_n\ra S_n\ \fii}

\newcommand\altercom{\iif\ \square_{i=1}^n B_i\ra S_i\ \fii}

\newcommand\repex{\ddo\ B_1\ra S_1 \square\ldots\square B_n\ra S_n\ \od}

\newcommand\repcom{\ddo\ \square_{i=1}^n B_i\ra S_i\ \od}

\newcommand\seqcom{\ddo\ \square_{i=1}^n B_i;\alpha_i\ra S_i\ \od}

\newcommand {\spann} {\mathrm{span}}

\newcommand{\rrto}[1]{\xhookrightarrow{#1}}
\newcommand{\con}[3]{\iif\ {#1}\ \then\ {#2}\ \eelse\ {#3}}

\newcommand{\Rto}[1]{\stackrel{#1}\Longrightarrow}
\newcommand{\nrto}[1]{\stackrel{#1}\nrightarrow}

\newcommand{\Rhto}[1]{\stackrel{\widehat{#1}}\Longrightarrow}
\newcommand{\define}{\ensuremath{\triangleq}}
\newcommand{\rsim}{\simeq}
\newcommand{\obis}{\approx_o}
\newcommand{\sbis}{\ \dot\approx\ } 
\newcommand{\stbis}{\ \dot\sim\ } 
\newcommand{\nssbis}{\ \dot\nsim\ } 

\newcommand{\bis}{\sim}
\newcommand{\rat}{\rightarrowtail}
\newcommand{\wbis}{\approx}
\newcommand{\id}{\mathcal{I}}
\newcommand{\stet}[1]{\{ {#1}  \}  } 
\newcommand{\unw}[1]{\stackrel{{#1}}\sim}
\newcommand{\rma}[1]{\stackrel{{#1}}\approx}

\def\step{\textsf{step}}
\def\obs{\textsf{obs}}
\def\dom{\textsf{dom}}
\def\purge{\textsf{ipurge}}
\def\source{\textsf{sources}}
\def\cnt{\textsf{cnt}}
\def\read{\textsf{read}}
\def\alter{\textsf{alter}}
\def\dirac#1{\delta_{#1}}

\def \srho {\sqrt{\rho}}
\def\tybool{\ensuremath{\mathbf{Boolean}}}
\def\tyint{\ensuremath{\mathbf{Integer}}}
\def\tyqubit{\ensuremath{\mathbf{Qubit}}}
\def\tyqudit{\ensuremath{\mathbf{Qudit}}}
\def\tyqureg{\ensuremath{\mathbf{Qureg}}}
\def\tyunitreg{\ensuremath{\mathbf{Unitreg}}}
\def\type{\ensuremath{\mathit{type}}}

\def\qconc{\mathcal{Q}}
\def\ps{\rm{PS}}
\def\is{\rm{IN}}
\def\aS{\sem{S}^{\#}}
\def\amap#1{\sem{#1}^{\#}}
\def\qstate{\rho}
\def\qassert{\Theta}
\def\qassertp{\Psi}
\def\casserts{\a}
\def\cstate{S}
\def\cstates{\prog}
\def\cassert{p}
\def\emptydis{\bot}
\def\qset{Q}
\def\qsetp{R}
\def\Exp{\mathrm{Exp}}

\def\supoprset{\mathbb{E}}

\def\leinf{\le_{\mathit{inf}}}
\def\geinf{\ge_{\mathit{inf}}}
\def\qstates{\s_V}
\def\qasserts{\a_V}
\def\qstatesh#1{\d(\mathcal{H}_{#1})}

\def\qassertsh#1{\mathcal{A}_{#1}}

\def\qstatesp{\mathcal{S}(\h')}

\newcommand\prog{\mathit{Prog}}
\def\ph{\ensuremath{\mathcal{P}(\h)}}
\def\phv{\ensuremath{\mathcal{P}(\h_V)}}

\def\l{\mathcal{L}}
\def\k{\mathcal{K}}
\def\qmc {\color{red}}
\def\dtmc {\color{black}}
\newcommand{\ysim}[1]{\stackrel{#1}\sim}
\def\z{\mathbf{0}}
\newcommand{\TRANDA}[3]{#1\xrightarrow{#2}_{{\sf D}}#3}
\def\pdist{\mathit{pDist}}

\def\C{\mathbb{C}}

\newcommand{\subs}[2]{{#2}/{#1}}

\def \Rm#1{\mbox{\rm #1}}
\def \lsem      {\raise1pt\hbox{\Rm {[\kern-.12em[}}}
\def \rsem      {\raise1pt\hbox{\Rm {]\kern-.12em]}}}
\def \sem#1{\mbox{\lsem$#1$\rsem}}
\def \asem#1{\mbox{\lsem$#1$\rsem$^\#$}}
\def \bsem#1{\mbox{\lsem$#1$\rsem$^b$}}

\def \asemp#1{\mbox{\lsem$#1$\rsem$_1^\#$}}
\def \asems#1{\mbox{\lsem$#1$\rsem$_\sigma^\#$}}
\def \asemsr#1{\mbox{\lsem$#1$\rsem$_{\sigma,\r}^\#$}}

\newtheorem{remark}{Remark}
	\def\pmstm{\iif\ P[\bar{q}] \ \then\ S_1\ \eelse\ S_0\ \pend}
	\def\pwstm{\while\ P[\bar{q}] \ \ddo\ S\ \pend}
	\def\qassert{a}
	\def\qassertp{b}
	\newsavebox{\tablebox}

\section{Introduction}

Abstract interpretation, originated by Cousot and Cousot~\cite{cousot1977abstract}, is a powerful technique for static analysis of program correctness. The key idea of abstract interpretation is to provide an over-approximation (abstraction) of the concrete program semantics. Consequently, analysis of programs can be done at the abstract level, which is usually much simpler, and correctness in the abstract domain implies correctness in the concrete domain. Over the past few decades, abstract interpretation has become increasingly popular in describing and analysing computational models in many different areas of computer science, such as model checking~\cite{clarke1994model,dams1997abstract,clarke2003counterexample,cousot2000temporal}, process calculi~\cite{cleaveland1994testing,venet1996abstract}, type inference~\cite{cousot1997types,comini2008polymorphic},  and theorem proving~\cite{plaisted1981theorem}.  More recently, analysis of quantum programs using abstract interpretation was proposed in~\cite{yu2021quantum}, where tuples of local subspaces of the state Hilbert space are regarded as abstraction of quantum states.

Hoare logic~\cite{hoare1969axiomatic} is one of the most popular syntax-oriented approaches for verifying the correctness of computer programs. 
The core notion of Hoare logic is the program correctness expressed in the form of Hoare triples $\ass{p}{S}{q}$ where $S$ is a program, and $p$ and $q$ are \emph{assertions} that describe the pre- and post-conditions of $S$, respectively. For non-probabilistic programs, assertions are typically first-order logic formulas. Intuitively, the triple $\ass{p}{S}{q}$ states that if $S$ is executed at a \emph{state} (evaluation of program variables) satisfying $p$ and it terminates, then $q$ must hold in the final state. This is called \emph{partial correctness}. If termination is further guaranteed in all states that satisfy $p$, then partial correctness becomes a \emph{total} one. 
Hoare logic provides a proof system which can systematically deduce the correctness of a program  represented by such a triple. After decades of development, Hoare logic has been successfully applied to the analysis of programs with non-determinism, recursion, parallel execution, probabilistic features, etc. For a detailed survey, please refer to~\cite{apt2019fifty,apt2010verification}.

In recent years, Hoare-type logics for quantum programs have been developed. Unlike the classical case, a logic system for assertions of quantum states was proposed only very recently~\cite{ying2022birkhoff}; most quantum Hoare logics developed so far simply take a certain semantic set as possible assertions. 
\cite{d2006quantum} proposes to regard positive operators not greater than  (w.r.t. L\"{o}wner order) the identity operator as (quantitative)  assertions of quantum states. Then the \emph{degree} of a quantum state $\rho$ satisfying an assertion $M$ is denoted $\tr(M\rho)$, the expected value of outcomes if $\rho$ is measured according to the projective measurement determined by $M$.  A Hoare triple for quantum programs then has the form $\ass{M}{S}{N}$ where $S$ is a quantum program, and $M$ and $N$ are quantum assertions, and it is partially (resp. totally) correct if $$\tr(M\rho) \leq \tr(N\cdot\sem{S}(\rho)) + \tr(\rho) - \tr(\sem{S}(\rho))$$
 (resp. $\tr(M\rho) \leq \tr(N\cdot \sem{S}(\rho))$) for any quantum state $\rho$~\cite{ying2012floyd,ying2022proof,ying2016foundations,feng2022verification,feng2021quantum}. Note that the term $\tr(\rho) - \tr(\sem{S}(\rho))$ appearing in partial correctness but not in total correctness denotes the probability for $S$ to diverge (not terminate) at $\rho$. Another line of research, which is conceptually and computationally simpler, is to regard subspaces (or equivalently, orthogonal projectors) of the associated Hilbert space as (qualitative) assertions, and a quantum state $\rho$ satisfies a subspace assertion $P$ iff the support (the image space of linear operators) of $\rho$ is included in $P$~\cite{zhou2019applied,unruh2019quantum}. Partial correctness of $\ass{P}{S}{Q}$ means that $\sem{S}(\rho)$ satisfies $Q$ as long as $\rho$ satisfies $P$, similar to the classical case. Total correctness further requires that $\tr(\sem{S}(\rho)) = \tr(\rho)$ for all $\rho$ satisfying $P$. Obviously, subspace based Hoare logics are special cases of positive operator based ones, by noting that projectors are positive operators with eigenvalues being either 0 or 1.
 For comparison with abstract interpretation, we will consider this simplified form of quantum Hoare logic in this paper. 

Incorrectness logic~\cite{o2019incorrectness}, or reverse Hoare logic~\cite{vries2011reverse}, is a complementary method to reason about the \emph{incorrectness} of programs. Similar to Hoare logic, the key notion of incorrectness logic is a triple $\iass{p}{S}{q}$ which asserts that any state satisfying $q$ is reachable from a state satisfying $p$ by executing the program $S$. Note that the postcondition $q$ in the Hoare triple $\ass{p}{S}{q}$ provides an \emph{over-approximation} of the set of final states when starting with states in $p$, while $q$ in the incorrectness triple $\iass{p}{S}{q}$ provides an \emph{under-approximation} of the same set. Again, incorrectness logic has recently been extended to analyse quantum programs where quantum assertions are taken as subspaces of the associated Hilbert space~\cite{yan2022incorrectness}.

So far, the aforementioned approaches for analysis of quantum programs, namely abstract interpretation, Hoare logic, and incorrectness logic, have been developed largely in parallel. In this paper we analyse the relationship between them. Our discovery is twofold:

\begin{enumerate}
	\item Given a quantum abstract interpretation in which the abstract domain for quantum states is well-structured, a quantum Hoare logic is naturally induced which is sound  (resp. sound and relatively complete) if the abstract operator is sound (resp. complete) for each basic command of the quantum language under consideration. Similar results apply to quantum incorrectness logic as well. Compared to the applied Hoare logic~\cite{zhou2019applied} and incorrectness logic~\cite{yan2022incorrectness} for quantum programs, our induced logic systems are in a forward fashion, making them more useful in certain applications.
	
	\item Conversely, for any quantum Hoare logic in which the set of assertions for quantum states is well-structured, a quantum abstract interpretation is naturally induced which is sound (resp. complete) if the Hoare logic is sound (resp. sound and relatively complete). Again, similar results apply to quantum incorrectness logic as well. As an application, these results imply the non-existence of any sound and relatively complete Hoare or incorrectness logic for quantum programs if tuples of local subspaces are taken as assertions.  
\end{enumerate}

The rest of this paper is organised as follows. We review in Sec.~\ref{sec:pre} some basic notions from abstract interpretation and quantum computing that will be used throughout this paper. In Sec.~\ref{sec:qlang}, a simple quantum while-language is introduced, which serves as the target language of our analysis, and its concrete denotational semantics is defined. We examine the relationship between well-structured abstract domains and sets of assertions for quantum states in Sec.~\ref{sec:advsqa}, which sets the stage for the discussion that follows. Sec.~\ref{sec:hlvsai} is the main part of this paper, where we elaborate on how a sound (resp. sound and relatively complete) Hoare logic and a sound (resp. complete) abstract interpretation of quantum programs can be derived from each other. Similar results are also discussed for incorrectness logic. Finally, Sec.~\ref{sec:con} concludes the paper and points out some directions for future study.

\section{Preliminaries}
\label{sec:pre}
This section is devoted to fixing some notations from abstract interpretation and quantum computing that will be used in this paper. For a thorough introduction to the relevant backgrounds, please refer to~\cite{mine2017tutorial} (abstract interpretation) and~\cite{nielsen2002quantum} (quantum computing).

	\subsection{Abstract Interpretation}

Let the concrete domain for program states be a partially ordered set, a.k.a poset, $(C, \leq_C)$. Typically, elements in $C$ are subsets of program states, and $\leq_C$ is just the set inclusion. Let the abstract domain be another poset $(A, \leq_A)$. The concrete and abstract domains are related by a pair of monotonic functions $\alpha: C\ra A$ and $\gamma: A\ra C$. The pair $(\alpha, \gamma)$ is said to form a \emph{Galois connection}, denoted $(C, \leq_C) \galois{\alpha}{\gamma} (A, \leq_A)$, if for all $c\in C$ and $a\in A$, $c\leq_C \gamma(a)$ iff $\alpha(c) \leq_A a$.  Furthermore, if $\alpha\circ \gamma = \mathrm{id}_A$ then $(\alpha, \gamma)$ forms a \emph{Galois embedding}, where $\mathrm{id}_A$ is the identity relation over $A$. Note that a Galois connection $(C, \leq_C) \galois{\alpha}{\gamma} (A, \leq_A)$ is a Galois embedding iff any of the following holds: (1) $\alpha$ is surjective, that is, for all $a\in A$, there exists $c\in C$ such that $\alpha(c) = a$; (2) $\gamma$ is injective, that is, for all $a,a'\in A$, $\gamma(a) = \gamma(a')$ implies $a=a'$. 

Given an operator $f: C\ra C$ in the concrete domain, an abstract operator $f^\#: A\ra A$ is called a \emph{sound abstraction} of $f$ if 
$\alpha\circ f \leq_A f^\#\circ\alpha$, and it is \emph{complete} if $\alpha\circ f = f^\#\circ\alpha$. It is easy to check that complete abstractions are closed under composition; that is, if $f^\#$ and $g^\#$ are complete abstractions of $f$ and $g$ respectively, then $f^\#\circ g^\#$ is a complete abstraction of $f\circ g$.
Note that a complete abstraction does not necessarily exist. However,
the \emph{best abstraction} of $f$, defined as $\alpha\circ f\circ \gamma$, always exists. It is the smallest one among all sound abstractions.

\begin{remark}
	In the literature, there are alternative definitions of sound and complete abstraction using the concretisation function $\gamma$ instead of the abstraction function $\alpha$. Specifically, an abstract operator $f^\#$ of $f$ is sound if $f\circ \gamma \leq_C \gamma \circ f^\#$, and it is complete if $f\circ \gamma = \gamma \circ f^\#$. It can be easily checked that when $(\alpha, \gamma)$ forms a Galois connection, these two notions of soundness are equivalent; that is, $\alpha\circ f \leq_A f^\#\circ\alpha$ iff 
	$f\circ \gamma \leq_C \gamma \circ f^\#$. However, the two notions of completeness are in general incomparable. Nevertheless, in either case the complete abstraction, if exists, must be the best abstraction.
		
\end{remark}

\subsection{Basic quantum computing}

Let $\h$ be a finite-dimensional Hilbert space, and $\dim(\h)$ denote its dimension. Following the tradition of quantum computing, vectors in $\h$ are denoted in the Dirac form $|\psi\>$. The inner and outer products of two vectors $|\psi\>$ and $|\phi\>$ are written as $\<\psi|\phi\>$ and $|\psi\>\<\phi|$ respectively. Let $\lh$ be the set of linear operators on $\h$ and $A\in \lh$. Denote by $\tr(A)= \sum_{i\in I} \<\psi_i|A|\psi_i\>$ the \emph{trace} of $A$ where $\{|\psi_i\> : i\in I\}$ is an orthonormal basis of $\h$.  The \emph{adjoint} of $A$, denoted $A^\dag$, is the unique linear operator in $\lh$ such that $\<\psi|A|\phi\> = \<\phi|A^\dag |\psi\>^*$ for all $|\psi\>, |\phi\>\in \h$. Here, for a complex number $z$, $z^*$ denotes its conjugate. An operator $A\in \lh$ is said to be (1) \emph{hermitian} if $A^\dag = A$; (2) \emph{unitary} if $A^\dag A = I_\h$, the identity operator on $\h$; (3) \emph{positive} if for all $|\psi\>\in \h$, $\<\psi|A|\psi\>\geq 0$. 
Every hermitian operator $A$ has a \emph{spectral decomposition} form $A  = \sum_{i\in I} \lambda_i |\psi_i\>\<\psi_i|$ where $\{|\psi_i\> : i\in I\}$ constitute an orthonormal basis of $\h$. 
The L\"owner (partial) order $\le$ on $\lh$ is defined by letting $A\le B$ iff $B-A$ is positive. 

A linear operator $\e$ from $\l(\h_1)$ to $\l(\h_2)$ is called a \emph{super-operator}.  It is said to be (1) \emph{positive} if it maps positive operators on $\h_1$ to positive operators on $\h_2$; (2) \emph{completely positive} if $\mathcal{I}_\h\otimes \e$ is positive for all finite dimensional Hilbert space $\h$, where $\mathcal{I}_\h$ is the identity super-operator on $\lh$; (3) \emph{trace-preserving} (resp. \emph{trace-nonincreasing}) if 
$\tr(\e(A)) = \tr(A)$ (resp. $\tr(\e(A)) \leq \tr(A)$ for any positive operator $A\in \l(\h_1)$. Given the tensor product space $\h_1\otimes \h_2$, the \emph{partial trace} with respect to $\h_2$, denoted $\tr_{\h_2}$, is a linear mapping from $\l(\h_1\otimes \h_2)$ to $\l(\h_1)$ such that for any $|\psi_i\>, |\phi_i\> \in \h_i$, 
$$\tr_{\h_2}(|\psi_1\>\<\phi_1|\otimes |\phi_1\>\<\phi_2|) = 
\<\phi_2|\phi_1\> |\psi_1\>\<\phi_1|.$$
The definition extends linearly to $\l(\h_1\otimes \h_2)$.

According to von Neumann's formalism of quantum mechanics
\cite{vN55}, any quantum system with finite degrees of freedom is associated with a finite-dimensional Hilbert space $\h$ called its \emph{state space}. When $\dim(\h) = 2$, such a system is called a \emph{qubit}, the analogy of bit in classical computing. A {\it pure state} of the system is described by a normalised vector in $\h$. When the system is in state $|\psi_i\>$ with probability $p_i$, $i\in I$, it is in a \emph{mixed} state, represented by the \emph{density operator} $\sum_{i\in I} p_i|\psi_i\>\<\psi_i|$ on $\h$. Obviously, a density operator is positive and has trace 1. 
In this paper, we follow Selinger's convention~\cite{selinger2004towards} to  regard \emph{partial density operators}, i.e. positive operators with traces not greater than 1 as (unnormalised) quantum states. Intuitively, a partial density operator $\rho$ denotes a legitimate quantum state $\rho/\tr(\rho)$ which is obtained with probability $\tr(\rho)$. Denote by $\dh$ the set of partial density operators on $\h$. The state space of a composite system (e.g., a quantum system consisting of multiple qubits) is the tensor product of the state spaces
of its components. For any $\rho$ in $\d(\h_1 \otimes \h_2)$, the partial traces
$\tr_{\h_1}(\rho)$ and $\tr_{\h_2}(\rho)$ are
the reduced quantum states of $\rho$ on $\h_2$ and $\h_1$, respectively. 

The \emph{evolution} of a closed quantum system is described by a unitary
operator on its state space: if the states of the system at 
$t_1$ and $t_2$ are $\rho_1$ and $\rho_2$, respectively, then
$\rho_2=U\rho_1U^{\dag}$ for some unitary $U$. The general dynamics that can occur in a physical system is
described by a completely positive and trace-preserving super-operator. Note that the unitary transformation $\e_U(\rho)\define U\rho U^\dag$ is such a super-operator. 
A (projective) quantum {\it measurement} $\m$ is described by a
collection $\{P_i : i\in O\}$ of projectors (hermitian operators with eigenvalues being either 0 or 1) in the state space $\h$, where $O$ is the set of measurement outcomes. It is required that the
measurement operators $P_i$'s satisfy the completeness equation
$\sum_{i\in I}P_i = I_\h$. If the system was in state $\rho$ before measurement, then the probability of observing outcome $i$ is given by
$p_i=\tr(P_i\rho),$ and the state of the post-measurement system
becomes $\rho_i = P_i\rho P_i/p_i$ whenever $p_i>0$. Sometime we use 
a hermitian operator $M$ in $\lh$ called \emph{observable} to represent a  projective measurement. To be specific, let 
\[
M=\sum_{m\in \mathit{spec}(M)}mP_m
\] 
where $\mathit{spec}(M)$ is the set of eigenvalues of $M$, and $P_m$ the projection onto the eigenspace associated with $m$. Then the projective measurement determined by $M$ is $\{P_m : m\in \mathit{spec}(M)\}$.

	\section{A simple quantum while-language}
	\label{sec:qlang}
	
	The target quantum language of our analysis is
	an extension of the purely quantum while-language defined in~\cite{feng2007proof,ying2012floyd} with assertions. 
	Let $\QVar$, ranged over by $q, r, \cdots$, be a finite set of (qubit-type) quantum variables. For any subset $W$ of $\QVar$, let
	\[\h_W \define \bigotimes_{q\in W} \h_{q},
	\]
	where $\h_{q}$ is the 2-dimensional Hilbert space associated with $q$. As we use subscripts to distinguish Hilbert spaces with different quantum variables, their order in the tensor product is irrelevant. 
	
	The syntax of our language is given as follows:
	\begin{equation*}
		\begin{array}{cclr}
			S\quad  & ::= & \sskip & \textit{(no-op)}\\
			&|&  \bar{q}:=|0\> & \textit{(initialisation)}\\
			&| & 	\bar{q}\apply U &\textit{(unitary operation)}\\
			&|&   \assert{P[\bar{q}] }& \textit{(assertion)}\\
			&|&   S_0;S_1&\textit{(sequence)}\\
			&| &  \pmstm & \textit{(conditional)}\\
			&|&  \pwstm	& \textit{(loop)}		
		\end{array}
	\end{equation*}
	where $S,S_0$ and $S_1$ are quantum programs, $\bar{q} \define q_1, \ldots, q_t$ a (ordered) tuple of distinct quantum variables from $V$, $U$ a unitary operator on
	$\h_{\bar{q}}$, and $P$ a subspace of $\h_{\bar{q}}$.
	Sometimes we also use $\bar{q}$ to denote the (unordered) set $\{q_1,q_2,\dots,q_t\}$.

	For clarification, we often use subscripts to emphasise the quantum system on which an operator is performed. For example, $P_W$ means $P$ acting on system $W$. To simplify notations, we do not distinguish $P_W$ with its cylindrical extension $P_W\otimes I_{V\backslash W}$ to $\h_V$. Furthermore, we use the same symbol, say $P$, to denote both a subspace  and its corresponding projector. The correct meaning of theses notations should be clear from the context. Consequently, a quantum state $|\psi\> \in P$ iff $P|\psi\> = |\psi\>$. Here, the former $P$ denotes a subspace, while the latter one denotes the corresponding projector.

	\begin{definition}[Denotational semantics]\label{def:denotational}
		Let $S$ be a  quantum program. The \emph{denotational semantics} of $S$ is a mapping 
		$\sem{S} : \dhv\ra  \dhv$ defined inductively in Fig.~\ref{fig:semantics}, where $P_{\bar{q}}^\perp = I_{\h_V} - P_{\bar{q}}$ is the projector onto the orthocomplement of $P_{\bar{q}}$ in $\h_V$. 
	\end{definition}

	Intuitively, the $\sskip$ statement does not change the input state, while $\bar{q} := |0\>$ sets the system $\bar{q}$ to state $|0\>$ where $|\bar{q}|= t$. Note that $|i\>_{\bar{q}}\<j|$ denotes the operator $|i\>\<j|$ acting on $\bar{q}$, and $\{|0\>, \cdots, |2^t-1\>\}$ constitute the computational basis of $\h_{\bar{q}}$. The statement $\bar{q}\apply U$ applies the unitary operator $U$ on $\bar{q}$, while $\assert{P[\bar{q}]}$ measures system $\bar{q}$ according to the projective measurement $\{P, P^\perp\}$ and post-selects the outcome for $P$; that is, if $P^\perp$ is observed, then the program aborts without outputting anything (or equivalently, it outputs the zero operator). Note also that here we adopt the convenience 
	of using a partial density operator $\rho$ to encode both the (normalised) quantum state $\rho/\tr(\rho)$ and the probability $\tr(\rho)$ of reaching it~\cite{selinger2004towards}. Similarly, the branching statement $\pmstm$ also measures system $\bar{q}$ according to the projective measurement $\{P, P^\perp\}$ and then executes $S_1$ or $S_0$ depending on the measurement outcome. The output of this statement is defined as the combination of the two branches. Again, thanks to the convention of partial density operators, this combination is simply the summation of the output states from both branches. Finally, the while loop $\pwstm$ takes into account the output states from different iterations. The well-definedness of the semantics of while loops comes from the fact that the set $\d(\h_V)$ of partial density operators on $\h_V$ is a complete partial order set~\cite{selinger2004towards,ying2016foundations}.

	\begin{figure}[t]
		\begin{align*}
			\sem{\sskip}(\rho) &= \rho\\
			\sem{\bar{q}:=|0\>}(\rho) &= \sum_{i=0}^{2^t-1} |0\>_{\bar{q}}\<i|\rho|i\>_{\bar{q}}\<0|\\
			\sem{\bar{q}\apply U}(\rho) &=U_{\bar{q}}\rho U^\dag_{\bar{q}}\\
			\sem{\assert{P[\bar{q}]}}(\rho) &= P_{\bar{q}}\rho P_{\bar{q}}\\		
			\sem{S_0; S_1}(\rho) &= \sem{S_1}\circ \sem{S_0}(\rho)\\
			\sem{\pmstm}(\rho) &=  \sem{\assert P[\bar{q}]; S_1}(\rho)+\sem{\assert{P^\perp[{\bar{q}}]}; S_0} (\rho) \\
			\sem{\pwstm}(\rho) &= \sum_{i=0}^\infty \sem{(\assert P[\bar{q}]; S)^i; \assert{P^\bot[\bar{q}]}}(\rho)
		\end{align*}
		\caption{Denotational semantics for  quantum programs. 
		}
		\label{fig:semantics}
	\end{figure}

	For the purpose of abstract interpretation for quantum programs, we take, and fix throughout this paper, 
	\[
		(\qconc \define 2^{\dhv}, \subseteq, \cup, \cap, \emptyset, \dhv)
	\]
	to be the concrete domain for their (collecting) semantics, 
	where $\cup$ and $\cap$ are respectively the normal union and intersection over sets of quantum states. Thus $\qconc$ is a complete lattice. The denotational semantics defined in Definition~\ref{def:denotational} is then extended to the concrete domain by letting $\sem{S}(R) = \{\sem{S}(\rho) : \rho\in R\}$ for any $R\in \qconc$. Obviously, such defined $\sem{S}$ is monotonic with respect to $\subseteq$; that is, $\sem{S}(R)\subseteq \sem{S}(R')$ whenever $R\subseteq R'$.

\section{Abstract quantum domains v.s. quantum assertions}
\label{sec:advsqa}
The main contribution of this paper is a close relationship between abstract interpretation and Hoare/incorrectness logic for quantum programs. To this end, we first  examine the relationship between abstract domains and assertions for quantum states. For the sake of simplicity, we assume that both the abstract quantum domain and the quantum assertion set are taken as a complete lattice. Furthermore, note that the concrete domain $\qconc$ of quantum states defined in Sec.~\ref{sec:qlang} enjoys a linear structure: for  any $\rho_1, \rho_2\in \dhv$ and $x_1, x_2 >0$, it holds that $x_1\rho_1 +x_2\rho_2 \in \dhv$ whenever $\tr(x_1\rho_1 +x_2\rho_2) \leq 1$. To respect this structure, we put some natural restrictions on both the abstract domain and the assertion set. 

\subsection{Well-structured abstract quantum domain}
\label{sec:wsad}

Let $(\a, \leq_\a, \vee, \wedge, \bot, \top)$ be a complete lattice. For $\a$ to be an abstract domain for the set of quantum state, we assume a
pair of monotonic functions $\alpha: \qconc\ra \a$ (abstraction) and $\gamma: \a\ra \qconc$ (concretisation). 

\begin{definition}\label{def:wsdomain}
	The complete lattice $\a$ as an abstract domain of $\qconc$ is said to be \emph{well-structured}, if 
	\begin{enumerate}
		\item $(\alpha, \gamma)$ forms a Galois embedding between $\qconc$ and $\a$; and 
		\item	for any $\rho_i\in \dhv$ and $x_i >0$, $i=1,2,\ldots$, with $\sum_{i}x_i\rho_i \in \dhv$,
\begin{equation}\label{eq:wels}
	\alpha\left(\sum_{i}x_i\rho_i\right) = \bigvee_i \alpha(\rho_i).
\end{equation}
Here and in the following, we abbreviate $\alpha(\{\rho\})$ into $\alpha(\rho)$ for simplicity.
\end{enumerate}
\end{definition}

Note that if  $(\alpha, \gamma)$ forms a Galois connection, then 	$\alpha\left(\bigcup_{i}\ \{\rho_i\}\right) = \bigvee_i \alpha(\rho_i)$. Thus the second condition in the above definition can be regarded as an analogy of this property for linear combination of quantum states.

The following lemma gives equivalent characterisations of the second condition in Definition~\ref{def:wsdomain} in terms of the concretisation function.

\begin{lemma}\label{lem:spbased}
	Let $(\a, \leq_\a, \vee, \wedge, \bot, \top)$ be a complete lattice, with $\alpha: \qconc\ra \a$ and $\gamma: \a\ra \qconc$ forming a Galois embedding. Then the following three statements are equivalent:
	\begin{enumerate}
		\item $\a$ is well-structured;
		\item for any $a\in \a$, $\rho_i\in \dhv$, and $x_i >0$ with $\sum_{i}x_i\rho_i \in \dhv$,
		\begin{equation}\label{eq:well-con}
			\sum_{i}x_i\rho_i\in \gamma(a)\quad \Leftrightarrow\quad \forall i, \rho_i\in \gamma(a);
		\end{equation}
		\item for any $a\in \a$, $\gamma(a)$ as a subset of $\d(\h_V)$ is 
		\begin{itemize}
			\item convex;
			\item $\omega$-cpo: if for each $i\geq 1$, $\rho_i \in \gamma(a)$ and $\rho_i \le \rho_{i+1}$, then $\bigsqcup_{i} \rho_i \in \gamma(a)$;
			\item down-closed: if $\rho \le \sigma$ and $\sigma\in \gamma(a)$, then $\rho\in \gamma(a)$ as well; and
			\item closed under positive scalar multiplication: if $\rho \in \gamma(a)$ and $x\rho\in \d(\h_V)$ with $x>0$, then $x\rho\in \gamma(a)$ as well. 
		\end{itemize} 
	\end{enumerate}
\end{lemma}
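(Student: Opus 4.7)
The plan is to establish the equivalence via the cycle $(1)\Rightarrow(2)\Rightarrow(3)\Rightarrow(1)$, using the Galois embedding $(\alpha,\gamma)$ to translate between abstract-side and concrete-side statements.

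For $(1)\Rightarrow(2)$, I would simply chain equivalences using the Galois connection together with \eqref{eq:wels}: $\sum_i x_i\rho_i\in\gamma(a)$ iff $\alpha(\sum_i x_i\rho_i)\leq_\a a$, which by the well-structured assumption equals $\bigvee_i\alpha(\rho_i)\leq_\a a$, hence iff each $\alpha(\rho_i)\leq_\a a$, i.e., each $\rho_i\in\gamma(a)$.

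For $(2)\Rightarrow(3)$, I would derive the four closure properties by choosing convenient instances of \eqref{eq:well-con}. Convexity with $\lambda\in(0,1)$ follows from taking two terms $\lambda\rho+(1-\lambda)\sigma$; closure under positive scalar multiplication is the single-term case $x\rho$; for down-closedness, given $\rho\leq\sigma\in\gamma(a)$, I decompose $\sigma=\rho+(\sigma-\rho)$ and use the $(\Rightarrow)$ direction of \eqref{eq:well-con}; and for $\omega$-cpo, given an increasing chain $\rho_i\in\gamma(a)$, I introduce the differences $\sigma_i=\rho_i-\rho_{i-1}$ (with $\rho_0=0$), note that $\sigma_i\leq\rho_i\in\gamma(a)$ so $\sigma_i\in\gamma(a)$ by down-closedness already proven, and then apply the $(\Leftarrow)$ direction of \eqref{eq:well-con} to $\bigsqcup_i\rho_i=\sum_i\sigma_i$.

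For $(3)\Rightarrow(1)$, I need to show \eqref{eq:wels}. The $(\geq)$ inequality is easier: setting $b=\alpha(\sum_i x_i\rho_i)$, each $x_i\rho_i\leq\sum_j x_j\rho_j\in\gamma(b)$, so down-closedness gives $x_i\rho_i\in\gamma(b)$, and rescaling by $1/x_i$ using closure under positive scalar multiplication yields $\rho_i\in\gamma(b)$, hence $\alpha(\rho_i)\leq_\a b$. For the $(\leq)$ direction, set $a=\bigvee_i\alpha(\rho_i)$; each $\rho_i\in\gamma(a)$, so scalar closure gives each $x_i\rho_i\in\gamma(a)$. I then build $\sum_i x_i\rho_i$ inside $\gamma(a)$ in two stages: for finite partial sums $\sigma_n=\sum_{i=1}^n x_i\rho_i$, I rewrite $\sigma_n$ as a scalar multiple of a convex combination of $\rho_1,\ldots,\rho_n$ (so that the convex combination lies in $\dhv$) and apply convexity plus scalar closure inductively; for the full (possibly infinite) sum, the partial sums form an increasing chain in $\gamma(a)$ with supremum $\sum_i x_i\rho_i$, and $\omega$-cpo finishes the job. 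The main obstacle I anticipate is precisely this last assembly step: linear combinations do not directly fit the shape of either convex combinations or monotone limits, so the argument must carefully factor each finite sum as (scalar)$\times$(convex combination) to stay within $\dhv$, and then use $\omega$-cpo to pass from finite to countable sums — this is the point where all four properties of $\gamma(a)$ are genuinely used.
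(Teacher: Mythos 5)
Your proposal is correct and follows essentially the same route as the paper's proof: the Galois-connection chain of equivalences for $(1)\Leftrightarrow(2)$, the decomposition $\sigma=\rho+(\sigma-\rho)$ for down-closedness, the telescoping differences for the $\omega$-cpo property, and the rescale-to-convex-combination plus $\omega$-cpo assembly of $\sum_i x_i\rho_i$. The only difference is organisational (you close the cycle at $(3)\Rightarrow(1)$ directly via the choices $b=\alpha(\sum_i x_i\rho_i)$ and $a=\bigvee_i\alpha(\rho_i)$, whereas the paper proves $(3)\Rightarrow(2)$ and composes), which changes nothing of substance.
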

\begin{proof}
	 (1) $\Leftrightarrow$ (2): Direct from the observation that 
	\begin{align*}
		\sum_{i}x_i\rho_i\in \gamma(a)&\quad \Leftrightarrow\quad 	\alpha\left(\sum_{i}x_i\rho_i\right) \leq_\a a
	\end{align*}
	while
	\[
	\forall i, \rho_i\in \gamma(a) \quad \Leftrightarrow	\quad \forall i,\  \alpha(\rho_i)\leq_\a a \quad  \Leftrightarrow \quad \bigvee_i \alpha(\rho_i)\leq_\a a \]
	
	(2) $\Rightarrow$ (3): For any $a\in \a$, it is easy to see from the sufficiency part of Eq.\eqref{eq:well-con} that $\gamma(a)$ is convex and closed under positive scalar multiplication. Now if $\rho\le \sigma$ and $\sigma\in \gamma(a)$, then $\sigma - \rho \in \d(\h_V)$ as well. From the fact that $\rho + (\sigma - \rho) = \sigma$, we know $\rho\in \gamma(a)$ from the necessity part of Eq.\eqref{eq:well-con}. Thus $\gamma(a)$ is down-closed. 
	
	Finally, if for each $i\geq 1$, $\rho_i \in \gamma(a)$ and $\rho_i \le \rho_{i+1}$, then $\rho_{i+1} - \rho_i \in \gamma(a)$ as well. From the fact that
	\[
	\bigsqcup_{i} \rho_i = \rho_1 + \sum_{i\geq 1} \left(\rho_{i+1} - \rho_i\right),
	\]
	we know $\bigsqcup_{i} \rho_i\in \gamma(a)$ from the sufficiency part of Eq.\eqref{eq:well-con}. Thus $\gamma(a)$ is an $\omega$-cpo. 
	
	(3) $\Rightarrow$ (2): The proof consists of two parts:
	\begin{align*}
		\sum_{i}x_i\rho_i\in \gamma(a)\quad & \Rightarrow \quad \forall i, x_i\rho_i\in \gamma(a) \quad\qquad \textrm{(down-closed)}\\
		& \Rightarrow \quad \forall i, \rho_i\in \gamma(a) \qquad\qquad \textrm{(scaling)}
	\end{align*}
and
	\begin{align*}
	\forall i, \rho_i\in \gamma(a)\quad & \Rightarrow \quad \forall n>0,\ \sum_{i=1}^n \frac{x_i}{\sum_{i=1}^nx_i}\rho_i\in \gamma(a) \quad\qquad \textrm{(convexity)}\\
	& \Rightarrow \quad \forall n>0,\ \sum_{i=1}^n x_i\rho_i\in \gamma(a) \hspace{5.3em} \textrm{(scaling)}\\
	& \Rightarrow \quad \sum_{i}x_i\rho_i\in \gamma(a)\hspace{8.7em}  \textrm{($\omega$-cpo)} \qedhere
\end{align*}
\end{proof}

\begin{example}[Subspace abstract domain]\label{exm:allspaces}
	The simplest example of well-structured abstract domain for quantum states is the subspace domain
	\[
		(\shv, \subseteq, \vee, \cap, \{0\}, \h_V)
	\]
	where $\shv$ is the set of all subspaces (or equivalently, all projectors) of $\h_V$, $\subseteq$ is the ordinary subset relation between subspaces (or equivalently, the L\"{o}wner order between the corresponding projectors), and the join $P \vee Q = \spann(P\cup Q)$ is defined as the subspace spanned by elements from $P$ or $Q$.  Then this domain is a complete lattice.
	Let the concretisation and abstraction functions $\gamma_s : \shv \ra \qconc$ and $\alpha_s : \qconc \ra \shv$ be defined as follows: for any $P\in \shv$ and $R\in \qconc$,
	\begin{align*}
		\gamma_s(P) &\define \left\{\rho\in \dhv: \supp{\rho}\subseteq P\right\},\\
		\alpha_s(R) &\define \bigvee\left\{\supp{\rho} : \rho \in R\right\}.
	\end{align*}
	Here $\supp{\rho}$ denotes the support subspace of $\rho$. We now show that such defined $\alpha_s$ and $\gamma_s$ form a Galois embedding between the concrete domain $\qconc$ and the abstract domain $\shv$:
		\begin{align*}
			R \subseteq \gamma_s(P)\quad &
			\Leftrightarrow\quad  R \subseteq \left\{\rho\in \dhv: \supp{\rho}\subseteq P\right\}\\
			&
			\Leftrightarrow\quad  \forall \rho\in R,  \supp{\rho}\subseteq P\\
			&
			\Leftrightarrow\quad  \alpha_s(R) \subseteq P	
		\end{align*}
		where the necessity part of the last equivalence is from the fact that $P$ is a subspace. Furthermore, it is easy to check that $\alpha_s$ is surjective: for any $P\in \shv$, $\alpha_s(\gamma_s(P)) = P$.

		Finally, for any $\rho_i\in \dhv$ and $x_i >0$, $i=1,2,\ldots$, with $\sum_{i}x_i\rho_i \in \dhv$,
		\[
			\alpha_s\left(\sum_{i}x_i\rho_i\right) = \supp{\sum_{i}x_i\rho_i} =  \bigvee_i \supp{\rho_i}=  \bigvee_i \alpha_s(\rho_i).
		\]
		Thus $\shv$ is a well-structured abstract domain for quantum states in $\dhv$. 
\end{example}

We now show that the subspace abstract domain $\shv$ presented in Example~\ref{exm:allspaces} is actually the \emph{most concrete} well-structured abstract quantum domain, in the sense that any other well-structured abstract domain can be regarded as an abstraction of $\shv$ in terms of a Galois embedding. 
For this purpose we first prove 
\begin{lemma}\label{lem:tmp}
	Let $\a$ be a well-structured abstract domain for quantum states, with $\alpha: \qconc\ra \a$ being the abstraction function. For any $R_1, R_2\subseteq \dhv$, 
	$$\alpha_s(R_1) = \alpha_s(R_2)\quad \Rightarrow \quad\alpha(R_1) = \alpha(R_2).$$
	Consequently, we have $\alpha = \alpha\circ \gamma_s\circ \alpha_s$.
\end{lemma}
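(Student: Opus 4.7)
The plan is to split the proof into a reduction, a key containment claim, and then the ``consequently'' part. First, I would observe that as the lower adjoint of a Galois connection, $\alpha$ preserves arbitrary joins, so $\alpha(R) = \bigvee_{\rho\in R}\alpha(\rho)$ for every $R\subseteq \dhv$ (this is just writing $R=\bigcup_{\rho\in R}\{\rho\}$ and using the abbreviation convention of the paper). Under the hypothesis $\alpha_s(R_1)=\alpha_s(R_2)$, it then suffices by symmetry to show $\alpha(\rho)\leq_\a \alpha(R_2)$ for every $\rho\in R_1$. Since $\supp{\rho}\subseteq \alpha_s(R_1)=\alpha_s(R_2)$, everything reduces to proving the following key claim: for any $R\subseteq \dhv$ and any $\rho\in \dhv$ with $\supp{\rho}\subseteq \alpha_s(R)$, we have $\alpha(\rho)\leq_\a \bigvee_{\sigma\in R}\alpha(\sigma)$.

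Second, to prove the key claim, I would exploit finite-dimensionality of $\h_V$: the subspace $\alpha_s(R)$ is already the join of the supports of finitely many $\sigma_1,\dots,\sigma_k \in R$. Set $a \define \bigvee_{i=1}^k \alpha(\sigma_i)$, which is bounded above by $\bigvee_{\sigma\in R}\alpha(\sigma)$, and form the convex combination $\tau\define \frac{1}{k}\sum_{i=1}^k \sigma_i$, which lies in $\dhv$ because $\tr(\tau)\leq 1$. Equation~\eqref{eq:wels} gives $\alpha(\tau)=a$, hence $\tau\in \gamma(a)$, and by construction $\supp{\tau}=\alpha_s(R)\supseteq \supp{\rho}$. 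Finite-dimensionality now supplies a constant $c>0$ such that $\rho \le c\tau$ in L\"owner order (take $c=\|\rho\|_\infty/\lambda$ where $\lambda>0$ is the smallest positive eigenvalue of $\tau$). Pick $M>0$ large enough that both $\tau/M$ and $c\tau/M$ lie in $\dhv$. Scalar-multiplication closure of $\gamma(a)$ (Lemma~\ref{lem:spbased}) yields $c\tau/M\in \gamma(a)$; since $\rho/M \le c\tau/M$ and $\rho/M\in \dhv$, down-closure gives $\rho/M\in \gamma(a)$; and a final application of scalar closure, using that the original $\rho=M(\rho/M)$ is in $\dhv$, recovers $\rho\in \gamma(a)$, i.e., $\alpha(\rho)\leq_\a a$.

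Third, for the ``consequently'' statement, I would apply the first part with $R_1=R$ and $R_2=\gamma_s(\alpha_s(R))$ for an arbitrary $R\subseteq \dhv$. The Galois embedding property of $(\alpha_s,\gamma_s)$ from Example~\ref{exm:allspaces} forces $\alpha_s\circ\gamma_s=\mathrm{id}_{\shv}$, hence $\alpha_s(R_2)=\alpha_s(R)=\alpha_s(R_1)$, and we conclude $\alpha(R)=\alpha(\gamma_s(\alpha_s(R)))$.

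The main obstacle is the bookkeeping in the second stage. The geometric fact ``$\supp{\rho}\subseteq \supp{\tau}$ implies $\rho\leq c\tau$ for some $c>0$'' is immediate in finite dimensions, but $\gamma(a)$ is only closed under scalar multiplications that stay inside $\dhv$, so the naive multiplication by $c$ need not remain in the abstract fibre. The trick of first shrinking by $1/M$, applying down-closure there, and then enlarging back by $M$ (using that the target $\rho$ is itself in $\dhv$ by assumption) is what circumvents this normalisation barrier.
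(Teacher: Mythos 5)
Your proposal is correct and follows essentially the same route as the paper: extract finitely many states of $R_2$ whose supports span $\alpha_s(R_2)$, average them, dominate $\rho$ by a positive scalar multiple of that average in L\"owner order, and then invoke Eq.~\eqref{eq:wels} together with the scaling and down-closure properties of $\gamma(a)$, with the ``consequently'' part coming from $\alpha_s\circ\gamma_s\circ\alpha_s=\alpha_s$. The only difference is presentational: the paper absorbs your $1/M$ rescaling trick into the choice of the single constant $x$ with $x\rho\le\sum_i\rho_i/n$, whereas you carry out the normalisation bookkeeping explicitly.
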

\begin{proof} 
	Suppose $\alpha_s(R_1) = \alpha_s(R_2)$.
	For any $\rho\in R_1$, as $\h_V$ is finite dimensional, we can always find a set of states $\rho_1, \ldots, \rho_n$ in $R_2$, $\tr(\rho_i)=1$,  and $x>0$ such that $x\rho \le  \sum_{i=1} \rho_i/n$. Thus from the assumption that $\a$ is a well-structured, we have
	$$\alpha(\rho) = \alpha(x\rho) \leq_{\a} \bigvee_{i=1}^n\alpha(\rho_i) \leq_{\a} \alpha(R_2).
	$$
	Thus $\alpha(R_1) = \textstyle\bigvee\{\alpha(\rho): \rho\in R_1\}\leq_{\a}\alpha(R_2).$ The other direction can be similarly proved.
	
	The last part of the lemma follows from the observation that for any $R\subseteq \dhv$, $\alpha_s(R) =\alpha_s\circ \gamma_s\circ\alpha_s(R)$. 
\end{proof}
\begin{theorem}
	Let $\a$ be a well-structured abstract domain for quantum states, with $\alpha: \qconc\ra \a$ and $\gamma: \a\ra \qconc$ being the abstraction and concretisation functions respectively. Then there exists a Galois embedding $(\alpha',\gamma')$ with $\alpha': \shv\ra \a$  and $\gamma': \a\ra \shv$ such that $\alpha=\alpha'\circ \alpha_s$ and $\gamma = \gamma_s\circ\gamma'$. That is, the following diagram commutes for $\alpha$'s and $\gamma$'s respectively:
\begin{center}

\tikzset{every picture/.style={line width=0.75pt}} 

\begin{tikzpicture}[x=0.75pt,y=0.75pt,yscale=-1,xscale=1]
	
	\draw    (252.5,53) -- (344,53.98) ;
	\draw [shift={(346,54)}, rotate = 180.61] [color={rgb, 255:red, 0; green, 0; blue, 0 }  ][line width=0.75]    (10.93,-3.29) .. controls (6.95,-1.4) and (3.31,-0.3) .. (0,0) .. controls (3.31,0.3) and (6.95,1.4) .. (10.93,3.29)   ;
	\draw    (255,48) -- (347,48) ;
	\draw [shift={(253,48)}, rotate = 0] [color={rgb, 255:red, 0; green, 0; blue, 0 }  ][line width=0.75]    (10.93,-3.29) .. controls (6.95,-1.4) and (3.31,-0.3) .. (0,0) .. controls (3.31,0.3) and (6.95,1.4) .. (10.93,3.29)   ;
	\draw    (237.5,68) -- (237.98,123) ;
	\draw [shift={(238,125)}, rotate = 269.5] [color={rgb, 255:red, 0; green, 0; blue, 0 }  ][line width=0.75]    (10.93,-3.29) .. controls (6.95,-1.4) and (3.31,-0.3) .. (0,0) .. controls (3.31,0.3) and (6.95,1.4) .. (10.93,3.29)   ;
	\draw    (231.03,71) -- (232,127) ;
	\draw [shift={(231,69)}, rotate = 89.01] [color={rgb, 255:red, 0; green, 0; blue, 0 }  ][line width=0.75]    (10.93,-3.29) .. controls (6.95,-1.4) and (3.31,-0.3) .. (0,0) .. controls (3.31,0.3) and (6.95,1.4) .. (10.93,3.29)   ;
	\draw    (339,67) -- (270.51,126.69) ;
	\draw [shift={(269,128)}, rotate = 318.93] [color={rgb, 255:red, 0; green, 0; blue, 0 }  ][line width=0.75]    (10.93,-3.29) .. controls (6.95,-1.4) and (3.31,-0.3) .. (0,0) .. controls (3.31,0.3) and (6.95,1.4) .. (10.93,3.29)   ;
	\draw    (340.49,73.32) -- (271,134) ;
	\draw [shift={(342,72)}, rotate = 138.87] [color={rgb, 255:red, 0; green, 0; blue, 0 }  ][line width=0.75]    (10.93,-3.29) .. controls (6.95,-1.4) and (3.31,-0.3) .. (0,0) .. controls (3.31,0.3) and (6.95,1.4) .. (10.93,3.29)   ;
	
	\draw (356,46) node [anchor=north west][inner sep=0.75pt]   [align=left] {$\displaystyle \mathcal{A}$};
	\draw (290,56) node [anchor=north west][inner sep=0.75pt]   [align=left] {$\displaystyle \alpha $};
	\draw (291,31) node [anchor=north west][inner sep=0.75pt]   [align=left] {$\displaystyle \gamma $};
	\draw (218,134.4) node [anchor=north west][inner sep=0.75pt]    {$\mathcal{S}(\mathcal{H}_{V})$};
	\draw (243,89) node [anchor=north west][inner sep=0.75pt]   [align=left] {$\displaystyle \alpha _{s}$};
	\draw (214,88) node [anchor=north west][inner sep=0.75pt]   [align=left] {$\displaystyle \gamma _{s}$};
	\draw (300.5,102.5) node [anchor=north west][inner sep=0.75pt]   [align=left] {$\displaystyle \alpha '$};
	\draw (286,86) node [anchor=north west][inner sep=0.75pt]   [align=left] {$\displaystyle \gamma '$};
	\draw (230,44) node [anchor=north west][inner sep=0.75pt]   [align=left] {$\mathcal{Q}$};

\end{tikzpicture}

\end{center}
\end{theorem}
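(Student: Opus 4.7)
The plan is to define $\alpha' \define \alpha \circ \gamma_s : \shv \ra \a$ and $\gamma' \define \alpha_s \circ \gamma : \a \ra \shv$, both of which are monotonic as compositions of monotonic maps, and then verify in turn the four required properties: (a) the factorisation $\alpha = \alpha' \circ \alpha_s$; (b) the factorisation $\gamma = \gamma_s \circ \gamma'$; (c) the pair $(\alpha', \gamma')$ forms a Galois connection between $\shv$ and $\a$; and (d) $\alpha' \circ \gamma' = \mathrm{id}_\a$, which upgrades (c) to a Galois embedding.

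Item (a) is exactly the second conclusion of Lemma~\ref{lem:tmp}. The main work is therefore (b), i.e.\ showing $\gamma(a) = \gamma_s(\alpha_s(\gamma(a)))$ for every $a \in \a$. The inclusion $\gamma(a) \subseteq \gamma_s(\alpha_s(\gamma(a)))$ is immediate from the definition of $\alpha_s$ and $\gamma_s$: every $\rho \in \gamma(a)$ has $\supp{\rho} \subseteq \alpha_s(\gamma(a))$. For the converse, given any $\rho$ with $\supp{\rho} \subseteq \alpha_s(\gamma(a))$, I would apply $\gamma_s$ and then $\alpha$ monotonically and invoke Lemma~\ref{lem:tmp} on both $\{\rho\}$ and $\gamma(a)$ (which collapses the composite $\alpha \circ \gamma_s \circ \alpha_s$) to obtain
\[
\alpha(\rho) = \alpha(\gamma_s(\alpha_s(\{\rho\}))) \leq_\a \alpha(\gamma_s(\alpha_s(\gamma(a)))) = \alpha(\gamma(a)) = a,
\]
where the final equality uses that $(\alpha, \gamma)$ is a Galois embedding. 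From the Galois connection this yields $\rho \in \gamma(a)$, as desired.

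Once (a) and (b) are in hand, (c) follows from a short calculation using the Galois connection $(\alpha,\gamma)$ together with the fact that $\gamma_s$ is order-reflecting (since $(\alpha_s,\gamma_s)$ is an embedding, $\alpha_s \circ \gamma_s = \mathrm{id}$, so $\gamma_s(P) \subseteq \gamma_s(Q)$ implies $P \subseteq Q$):
\[
\alpha'(P) \leq_\a a \;\Leftrightarrow\; \gamma_s(P) \subseteq \gamma(a) = \gamma_s(\gamma'(a)) \;\Leftrightarrow\; P \subseteq \gamma'(a).
\]
For (d), $\alpha'(\gamma'(a)) = \alpha(\gamma_s(\alpha_s(\gamma(a)))) = \alpha(\gamma(a)) = a$ directly by (b) and the embedding property of $(\alpha,\gamma)$.

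The main obstacle I anticipate is (b): convincing oneself that no further appeal to well-structuredness beyond what is already packaged in Lemma~\ref{lem:tmp} is needed. Should the reduction through Lemma~\ref{lem:tmp} feel opaque, an alternative route goes directly through Lemma~\ref{lem:spbased}: finite dimensionality of $\h_V$ lets one realise the join $\alpha_s(\gamma(a))$ as $\bigvee_{i=1}^n \supp{\sigma_i}$ for finitely many $\sigma_i \in \gamma(a)$, so by convexity $\sigma \define \frac{1}{n}\sum_i \sigma_i$ lies in $\gamma(a)$ with $\supp{\sigma} = \alpha_s(\gamma(a))$; one then dominates $\rho$ by a sufficiently small positive multiple of $\sigma$ and combines down-closedness with closure under positive scalar multiplication to conclude $\rho \in \gamma(a)$. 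Either route makes (b) go through; the rest of the proof is routine diagram chasing.
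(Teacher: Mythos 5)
Your proposal is correct and matches the paper's proof in all essentials: you choose the same maps $\alpha'=\alpha\circ\gamma_s$ and $\gamma'=\alpha_s\circ\gamma$, lean on Lemma~\ref{lem:tmp} for the $\alpha$-factorisation and for $\alpha'\circ\gamma'=\mathrm{id}_\a$, and derive the $\gamma$-factorisation from the two Galois embeddings. The only (cosmetic) difference is that you verify the adjunction $\alpha'(P)\leq_\a a \Leftrightarrow P\subseteq\gamma'(a)$ elementwise after first establishing $\gamma=\gamma_s\circ\gamma'$, whereas the paper checks the unit--counit inequalities $\gamma'\circ\alpha'\geq\mathrm{id}_{\shv}$ and $\alpha'\circ\gamma'=\mathrm{id}_\a$ at the level of function composition and proves the $\gamma$-factorisation last.
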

\begin{proof}
Let $\alpha' \define \alpha\circ \gamma_s$ and $\gamma' \define \alpha_s\circ \gamma$. From the fact that $\gamma\circ \alpha \geq _{\qconc} \mathrm{id}_{\qconc}$ we have 
$$\gamma'\circ \alpha' =  \alpha_s\circ \gamma\circ \alpha\circ \gamma_s \geq_{\shv} \alpha_s\circ \gamma_s = \mathrm{id}_{\shv}.$$
Furthermore, from Lemma~\ref{lem:tmp}, we have $\alpha'\circ \alpha_s = \alpha\circ \gamma_s\circ \alpha_s = \alpha$ and so
$$\alpha'\circ \gamma' =  \alpha'\circ\alpha_s\circ \gamma =  \alpha\circ \gamma = \mathrm{id}_{\a}.$$
Thus $\shv\galois{\alpha'}{\gamma'} \a$ is indeed a Galois embedding. Finally, from $\gamma_s\circ \alpha_s \geq _{\qconc} \mathrm{id}_{\qconc}$ we have
$$\gamma_s\circ\alpha_s\circ\gamma \geq_{\a} \gamma.$$
On the other hand, 
$$\gamma = \gamma\circ\alpha\circ\gamma = \gamma\circ\alpha\circ\gamma_s\circ\alpha_s\circ\gamma\geq_{\a}\gamma_s\circ\alpha_s\circ\gamma.$$
Thus $\gamma_s\circ\gamma' = \gamma_s\circ\alpha_s\circ \gamma = \gamma$.
\end{proof}

The following lemma is useful in the analysis of conditional and loop constructs in our quantum while language.

\begin{lemma}\label{lem:spmw}
	Suppose $\a$ as an abstract domain of quantum states is well-structured. Then for any $R\subseteq \dhv$,
	\begin{align*}
		\alpha(\sem{\pmstm}(R)) &= \alpha(\sem{\assert P[\bar{q}]; S_1}(R)) \vee \alpha(\sem{\assert P^\bot[\bar{q}]; S_0}(R))
		\\
		\alpha(\sem{\pwstm}(R)) &= \bigvee_{i\geq 0} \alpha\left( \sem{\left(\assert P[\bar{q}]; S\right)^i; \assert P^\bot[\bar{q}]}(R)\right)
	\end{align*}
\end{lemma}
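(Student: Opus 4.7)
The plan is to reduce both identities to two basic distributive properties of the abstraction function $\alpha$: (i) $\alpha$ preserves arbitrary joins, because it is the left adjoint in the Galois connection $(\alpha,\gamma)$ between $\qconc$ and $\a$, so that $\alpha(\bigcup_{\rho\in R}\{\sem{T}(\rho)\})=\bigvee_{\rho\in R}\alpha(\sem{T}(\rho))$ for any command $T$; and (ii) the well-structuredness property \eqref{eq:wels}, which turns $\alpha$ of a (possibly countable) convex/positive combination of partial density operators into the join of the abstractions of the summands. Both formulas then follow by applying these two facts to the pointwise semantics given in Fig.~\ref{fig:semantics}.

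First I would treat the conditional. Fix $\rho\in R$. By the semantic clause for the conditional, $\sem{\pmstm}(\rho)$ equals the sum of $\sem{\assert P[\bar{q}]; S_1}(\rho)$ and $\sem{\assert P^\bot[\bar{q}]; S_0}(\rho)$, which is a two-term positive linear combination lying in $\dhv$. Applying \eqref{eq:wels} with weights $x_1=x_2=1$ yields
\[
\alpha(\sem{\pmstm}(\rho)) \;=\; \alpha(\sem{\assert P[\bar{q}]; S_1}(\rho)) \,\vee\, \alpha(\sem{\assert P^\bot[\bar{q}]; S_0}(\rho)).
\]
Taking the join over $\rho\in R$ and using that $\alpha$ preserves joins of $\bigcup$-indexed families, together with the commutativity of $\bigvee$ with a finite $\vee$, gives the desired identity for $\sem{\pmstm}$.

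For the while loop, fix $\rho\in R$ and write $T_i \define (\assert P[\bar{q}]; S)^i;\assert{P^\bot[\bar{q}]}$. The semantics gives $\sem{\pwstm}(\rho)=\sum_{i=0}^{\infty}\sem{T_i}(\rho)$, understood as a supremum of finite partial sums in $\dhv$. Since \eqref{eq:wels} is stated for countable positive combinations (and since the total sum lies in $\dhv$ by well-definedness of the semantics), I may apply it directly with $x_i=1$ to obtain
\[
\alpha(\sem{\pwstm}(\rho)) \;=\; \bigvee_{i\geq 0}\alpha(\sem{T_i}(\rho)).
\]
Taking the join over $\rho\in R$ once more, the join-preservation of $\alpha$ gives $\alpha(\sem{\pwstm}(R))=\bigvee_{\rho\in R}\bigvee_{i\geq 0}\alpha(\sem{T_i}(\rho))$; swapping the two joins (which is valid in any complete lattice) and recollapsing the inner $\bigvee_{\rho\in R}\alpha(\sem{T_i}(\rho))$ into $\alpha(\sem{T_i}(R))$ delivers the stated formula.

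The only delicate step is the countable-sum application of \eqref{eq:wels} in the loop case, where one must be confident that the denotational definition of $\sem{\pwstm}(\rho)$ as an $\omega$-limit of partial sums is really the same object one feeds to \eqref{eq:wels}. This is ensured by the $\omega$-cpo-plus-convexity characterisation of well-structured concretisations given in Lemma~\ref{lem:spbased}: the partial sums form a $\le$-increasing chain inside any $\gamma(a)$ containing all $\sem{T_i}(\rho)$, so their supremum also lies in $\gamma(a)$, which is exactly the content of \eqref{eq:wels} applied to countable sums. Once this point is observed, the rest of the proof is routine lattice manipulation.
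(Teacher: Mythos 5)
Your proof is correct and follows essentially the same route as the paper's: apply the join-preservation of $\alpha$ (from the Galois connection) to the union over $\rho\in R$, apply Eq.~\eqref{eq:wels} pointwise to the sum in the semantic clause, and swap the joins. The paper only writes out the conditional case and declares the loop case ``similar,'' so your explicit treatment of the countable sum via the $\omega$-cpo characterisation in Lemma~\ref{lem:spbased} is a welcome filling-in of that omitted detail rather than a different argument.
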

\begin{proof}
	We only prove for the conditional case; the loop one is similar. Let $T_1 = \assert P[\bar{q}]; S_1$ and $T_0= \assert P^\bot[\bar{q}]; S_0$. Then
	\begin{align*}
		\alpha(\sem{\pmstm}(R))
		&= \alpha(\left\{\sem{T_1}(\rho) + \sem{T_0}(\rho): \rho\in R \right\})\\
		&= \textstyle\bigvee \left\{\alpha(\sem{T_1}(\rho) + \sem{T_0}(\rho)): \rho\in R \right\}\\
		&= \textstyle\bigvee \left\{\alpha(\sem{T_1}(\rho)) \vee \alpha(\sem{T_0}(\rho)): \rho\in R \right\}\\ 
		&= \textstyle\bigvee \left\{\alpha(\sem{T_1}(\rho)): \rho\in R \right\} \vee \bigvee \left\{ \alpha(\sem{T_0}(\rho)): \rho\in R \right\}\\  
		& = \alpha(\sem{T_1}(R)) \vee \alpha(\sem{T_0}(R))
	\end{align*}
	where the second and the last equalities follow from the fact that $(\alpha, \gamma)$ forms a Galois embedding, and the third one from Eq.~\eqref{eq:wels}.
\end{proof}

For well-structured abstract domain $\a$, if we are given a proper definition for the abstract operator $\sem{e}$, which is assumed to be monotonic, of each basic command $e \in \{\sskip,\ \bar{q} := |0\>,\ \bar{q}\apply U,\ \assert{P[\bar{q}]}\}$, then
the abstract operator $\aS: \a\ra \a$ for any composite quantum program $S$ can be defined inductively as follows: for any $a\in \a$,
\begin{enumerate}
	\item $\amap{S_0; S_1}(a) \define \amap{S_1}\circ \amap{S_0}(a)$;
	\item $\amap{\pmstm}(a) \define \amap{\assert P[\bar{q}]; S_1}(a) \vee \amap{\assert P^\bot[\bar{q}]; S_0}(a)$; 
	\item $\amap{\pwstm}(a) \define \bigvee_{i\geq 0} \amap{(\assert P[\bar{q}]; S_1)^i; \assert P^\bot[\bar{q}]}(a)$.
\end{enumerate}
It is easy to check that the induced abstract operator $\amap{S}$ is monotonic for any $S$ as well. The following theorem shows that such defined abstract operators are sound (resp. complete) if they are sound (resp. complete) for basic commands.

\begin{theorem}\label{thm:basicsuff}
	Let $\a$ be a well-structured abstract domain of quantum states. 
	\begin{enumerate}
		\item If $\asem{e}$ is sound for all basic commands $e$, then $\asem{S}$ is sound for any program $S$.
		\item If $\asem{e}$ is complete for all basic commands $e$, then $\asem{S}$ is complete for any program $S$.
	\end{enumerate}
\end{theorem}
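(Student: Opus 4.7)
The plan is to prove both claims simultaneously by structural induction on the program $S$. The base cases, where $S$ is a basic command in $\{\sskip,\ \bar{q}:=|0\>,\ \bar{q}\apply U,\ \assert{P[\bar{q}]}\}$, are exactly the hypothesis of the theorem, so the work lies in the three inductive constructs: sequence, conditional, and loop. At each inductive step I will establish soundness in the form $\alpha\circ\sem{S}\leq_\a\amap{S}\circ\alpha$ pointwise on $R\in\qconc$; for the completeness part, the same derivation works with $\leq_\a$ strengthened to $=$, since every tool I use (induction hypothesis, monotonicity of $\amap{\cdot}$ noted in the paper, and Lemma~\ref{lem:spmw}) is bidirectional.

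The sequence case $S_0;S_1$ is short. Starting from $\alpha(\sem{S_0;S_1}(R))=\alpha(\sem{S_1}(\sem{S_0}(R)))$, the induction hypothesis on $S_1$ bounds this above by $\amap{S_1}(\alpha(\sem{S_0}(R)))$; then monotonicity of $\amap{S_1}$ combined with the induction hypothesis on $S_0$ yields $\amap{S_1}(\amap{S_0}(\alpha(R)))=\amap{S_0;S_1}(\alpha(R))$. This is essentially the closure-under-composition fact already stated in the paper. For the conditional $\pmstm$, Lemma~\ref{lem:spmw} rewrites $\alpha(\sem{\pmstm}(R))$ as the join of two terms of the form $\alpha(\sem{\assert P[\bar{q}];S_1}(R))$ and $\alpha(\sem{\assert P^\bot[\bar{q}];S_0}(R))$, each a sequential composition of a basic assert with a strictly smaller subprogram. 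Applying the sequence case already handled (together with soundness/completeness on the basic assertion) bounds each term by the corresponding $\amap{\cdot}(\alpha(R))$, and monotonicity of the join matches the defining equation for $\amap{\pmstm}$.

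The loop $\pwstm$ is the delicate case. Lemma~\ref{lem:spmw} again does the heavy lifting, expressing $\alpha(\sem{\pwstm}(R))$ as the $\omega$-join over $i\geq 0$ of $\alpha(\sem{W_i}(R))$ where $W_i=(\assert P[\bar{q}];S)^i;\assert P^\bot[\bar{q}]$. Each $W_i$ is a finite sequential composition built from the body $S$ together with the two basic assertions, so the sequence case and the induction hypothesis applied to $S$ give soundness (respectively completeness) for each $W_i$. Taking the join on both sides of the resulting family of inequalities then matches the definition $\amap{\pwstm}(\alpha(R))=\bigvee_{i\geq 0}\amap{W_i}(\alpha(R))$. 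Note that no extra continuity assumption on $\amap{\cdot}$ is needed: Lemma~\ref{lem:spmw} has already commuted $\alpha$ with the infinite sum on the concrete side, and the abstract side is an infinite join by construction.

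The main obstacle I anticipate is simply the bookkeeping for the loop: the programs $W_i$ are not strict syntactic subterms of $\pwstm$, so the induction must be phrased on a structural measure (for instance, nesting depth of loops plus syntactic size of the non-looping part) that strictly decreases from $\pwstm$ to its body $S$, and on which each $W_i$ is built from already-smaller pieces. Once this measure is in place, the three inductive cases compose cleanly, and the soundness proof transforms to a completeness proof by replacing every $\leq_\a$ with $=$.
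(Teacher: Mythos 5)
Your proposal is correct and follows essentially the same route as the paper: structural induction with the base case given by hypothesis, the composition argument for sequencing, and Lemma~\ref{lem:spmw} to commute $\alpha$ with the join decomposition of conditionals and loops, with completeness obtained by upgrading each $\leq_\a$ to an equality. Your extra care about the induction measure (since the programs $W_i$ are not strict subterms of the loop) addresses a point the paper glosses over, but it does not change the substance of the argument.
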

\begin{proof}
	For clause (1), it suffices to prove by induction on the structure of $S$ that $\alpha\circ \sem{S} \leq_\a \aS \circ \alpha$ for any program $S$. Note that here we lift the order $\leq_\a$ between elements of $\a$ to functions from $\qconc$ to $\a$ in an entry-wise way. The basis case is directly from the assumption.
	\begin{enumerate}
		\item Let $S \equiv S_0; S_1$. By induction we have $\alpha\circ \sem{S_i} \leq_\a \amap{S_i} \circ \alpha$ for $i=0,1$. Then from the monotonicity of $\sem{S_0}$ and $\asem{S_1}$,
		$$\alpha\circ \sem{S} = \alpha\circ \sem{S_1}\circ \sem{S_0} \leq_\a \amap{S_1} \circ \alpha\circ \sem{S_0}  \leq_\a  \amap{S_1}\circ \amap{S_0} \circ \alpha = \aS \circ \alpha.$$
		
		\item Let $S\equiv \pmstm$. Let $T_1 = \assert P[\bar{q}]; S_1$ and $T_0= \assert P^\bot[\bar{q}]; S_0$. Then by induction, we have
		$\alpha\circ \sem{T_i} \leq_\a \amap{T_i} \circ \alpha$ for $i=0,1$. Then from Lemma~\ref{lem:spmw}, for any $R\subseteq \dhv$,
		\begin{align*}
			\alpha\circ \sem{S}(R) &= \alpha\circ \sem{T_1}(R) \vee  \alpha\circ \sem{T_0}(R)\\
			&\leq_\a \amap{T_1} \circ \alpha(R) \vee \amap{T_0} \circ \alpha(R) =  \aS\circ \alpha(R).	
		\end{align*}
		
		\item Let $S\equiv \pwstm$. Let $T = \assert P[\bar{q}]; S$ and $T_0 = \assert P^\bot[\bar{q}]$. Then by induction, we have
		$\alpha\circ \sem{T} \leq_\a \amap{T} \circ \alpha$ and $\alpha\circ \sem{T_0} \leq_\a \amap{T_0} \circ \alpha$. From Lemma~\ref{lem:spmw}, for any $R\subseteq \dhv$,
		\begin{align*}
			\alpha\circ \sem{S}(R) &= \bigvee_{i\geq 0} \alpha\circ \sem{T_0}\circ \sem{T}^i(R)\\
			&\leq_\a \bigvee_{i\geq 0}  \amap{T_0}\circ (\amap{T})^i\circ \alpha(R)=  \aS\circ \alpha(R).	  
		\end{align*}
	\end{enumerate}
	This completes the proof of clause (1). Clause (2) can be similarly proved.  
\end{proof}

The following example shows that the abstract domain $\shv$ allows every quantum program to have a complete abstraction.

\begin{example}\label{exm:soiscomplete} 
Consider the well-structured abstract domain $\shv$ of quantum states presented in Example~\ref{exm:allspaces}. Let us define for each basic command	the corresponding abstract operator as follows: for any $Q\in \shv$,
			\begin{align}
	\asem{\sskip}(Q) &\define Q\notag \\
	\asem{\bar{q}:=|0\>}(Q) &\define  \left\{|0\>_{\bar{q}}\otimes |\psi\> : |\psi\> \in \supp{\tr_{\bar{q}}(Q)}\right\} \label{eq:abasic}\\
	\asem{\bar{q}\apply U}(Q) &\define \left\{U_{\bar{q}}|\psi\> : |\psi\>\in Q \right\}\notag\\
	\asem{\assert{P[\bar{q}]}}(Q) &\define \spann\left\{P_{\bar{q}}|\psi\> : |\psi\>\in Q\right\}\notag. 
\end{align}
We would like to prove that these abstract operators are all complete. First, it is easy to check that al the sets on the right-hand side of Eq.~\eqref{eq:abasic} are valid subspaces of $\h_V$. Let us take $\assert{P[\bar{q}]}$ as an example. For any $R\in \qconc$,
\begin{align*}
	\alpha_s\circ \sem{\assert{P[\bar{q}]}}(R) & = \supp{\textstyle\bigcup \left\{P_{\bar{q}}\rho P_{\bar{q}} : \rho\in R\right\}}\\
	&	= \spann\left\{P_{\bar{q}}|\psi\> :  |\psi\>\in \supp{\rho}, \rho\in R\right\}.
\end{align*}
On the other hand, let $Q' \define  \textstyle\bigvee \left\{\supp{\rho} : \rho\in R\right\}$. Then
\begin{align*}
	\asem{\assert{P[\bar{q}]}}\circ \alpha_s(R) & = \asem{\assert{P[\bar{q}]}}(Q')\\
	&	= \spann\left\{P_{\bar{q}}|\phi\> :  |\phi\>\in  Q'\right\}.
\end{align*}
Note that any $ |\phi\>$ in $Q'$ can be written as a linear combination $|\phi\> =  \sum_i \beta_i |\psi_i\>$ where each $|\psi_i\>$ is taken from the support subspace of some state in $R$; that is, $|\psi_i\>\in \supp{\rho_i}$ while $\rho_i\in R$. Thus $$P_{\bar{q}}|\phi\>\in \spann\left\{P_{\bar{q}}|\psi\> :  |\psi\>\in \supp{\rho}, \rho\in R\right\},$$
and consequently, $\asem{\assert{P[\bar{q}]}}\circ \alpha_s(R)\subseteq \alpha_s\circ \sem{\assert{P[\bar{q}]}}(R)$. The other direction of inclusion is obvious.

From Theorem~\ref{thm:basicsuff}, such defined abstract operators can be extended to any composite quantum programs, and the extended ones are also complete.
In the following, we show a more direct way to define these complete abstract operators. Note that for any quantum program $S$, the denotational semantics $\sem{S}$ can be regarded as a completely positive and trace non-increasing super-operator over the set $\dhv$ of partial density operators. Thus by Kraus representation theorem~\cite{kraus1983states}, there exist a finite set of Kraus operators $E_k, k\in K$, such that for any $\rho\in \dhv$, $\sem{S}(\rho) = \sum_k E_k\rho E_k^\dag$. The abstract operator corresponding to $S$ can then be defined as follows:
\[
\asem{S}(Q) \define  \spann\{E_k |\psi\> : k\in K,|\psi\>\in Q\}. 
\]
Furthermore, this definition coincides with the abstract operators defined in Eq.~\eqref{eq:abasic}. Note that 
$$\supp{\sem{S}(\rho)} = \supp{\sum_k E_k\rho E_k^\dag} = \spann\left\{E_k|\psi\> : k\in K, |\psi\>\in \supp{\rho}\right\}.$$
Following similar lines of the proof for completeness of $\asem{\assert{P[\bar{q}]}}$ above, we can show that 
\[	\alpha_s\circ \sem{S}(R) =	\asem{S}\circ \alpha_s(R)  \]
for any program $S$ and set of states $R$. In other words, $\asem{S}$ is indeed the complete abstraction of $\sem{S}$.
\end{example}

To conclude this subsection, we show a useful property of complete abstraction of functions on quantum states.

\begin{lemma}\label{lem:fshapcon}
	Let $\a$ be a well-structured abstract domain for quantum states. 
	Let $a_i\in \a$ for each $i$, and $f^{\#}: \a\ra \a$ be the complete abstraction of operator $f: \dhv \ra \dhv$. Then 
	\[
	f^{\#}(\bigvee_{i} a_i) = \bigvee_{i} f^{\#}(a_i). 
	\]
\end{lemma}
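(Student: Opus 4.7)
The plan is to reduce the statement to three standard facts: that $\alpha$ as a lower adjoint preserves arbitrary joins, that $f$ lifted to $\qconc = 2^{\dhv}$ acts pointwise and therefore commutes with unions, and that completeness reads $\alpha\circ f = f^\#\circ\alpha$. The well-structuredness of $\a$ gives a Galois embedding $(\alpha,\gamma)$, which is the only part of Definition~\ref{def:wsdomain} I will actually need: because $\alpha\circ\gamma = \mathrm{id}_\a$, every element of $\a$ is in the image of $\alpha$.

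Concretely, I would first set $R_i \define \gamma(a_i)\subseteq \dhv$ for each index $i$, so that $\alpha(R_i) = a_i$. Since $(\alpha,\gamma)$ is a Galois connection between the complete lattices $(\qconc,\subseteq)$ and $(\a,\leq_\a)$, the abstraction $\alpha$ preserves arbitrary joins, giving
\[
\bigvee_i a_i \;=\; \bigvee_i \alpha(R_i) \;=\; \alpha\!\left(\bigcup_i R_i\right).
\]
Applying $f^\#$ to both sides and using completeness ($f^\#\circ\alpha = \alpha\circ f$), this becomes
\[
f^\#\!\left(\bigvee_i a_i\right) \;=\; f^\#\!\left(\alpha\!\left(\bigcup_i R_i\right)\right) \;=\; \alpha\!\left(f\!\left(\bigcup_i R_i\right)\right).
\]

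The second step is the observation that $f$, originally defined on $\dhv$, is extended to $\qconc$ pointwise (exactly as in the extension of $\sem{S}$ to $\qconc$ in Sec.~\ref{sec:qlang}), and thus commutes with arbitrary unions: $f\!\left(\bigcup_i R_i\right) = \bigcup_i f(R_i)$. Substituting and then applying the join-preservation of $\alpha$ together with completeness once more yields
\[
\alpha\!\left(\bigcup_i f(R_i)\right) \;=\; \bigvee_i \alpha(f(R_i)) \;=\; \bigvee_i f^\#(\alpha(R_i)) \;=\; \bigvee_i f^\#(a_i),
\]
which closes the chain of equalities.

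There is no real obstacle: the argument is essentially a diagram chase. The only subtle point worth noting explicitly is that surjectivity of $\alpha$ (equivalently, $\alpha\circ\gamma = \mathrm{id}_\a$ from the Galois embedding) is what allows us to replace each $a_i$ by an $\alpha$-image and thereby push $f^\#$ across the join via completeness; without the embedding property one would only obtain the inequality $\bigvee_i f^\#(a_i) \leq_\a f^\#(\bigvee_i a_i)$ from monotonicity alone.
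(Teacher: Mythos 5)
Your proof is correct and follows essentially the same route as the paper's: both set $R_i = \gamma(a_i)$, use the Galois embedding to write $a_i = \alpha(R_i)$, push $f^\#$ through $\alpha$ via completeness, and exploit the pointwise extension of $f$ to sets together with join-preservation of $\alpha$. The chain of equalities you give matches the paper's almost verbatim, and your closing remark about why the embedding (rather than a mere connection) is needed is a correct and worthwhile observation.
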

\begin{proof}
	Let $R_i = \gamma(a_i)$. Then from the assumption that $(\alpha, \gamma)$ forms a Galois embedding between $\qconc$ and $\a$, we have $a_i = \alpha(R_i)$. Note that $\alpha(\textstyle\bigcup_i R_i) = \bigvee_i \alpha(R_i)$. Thus
	\begin{align*}
		f^{\#}(\textstyle\bigvee_{i} \alpha(R_i)) & =	f^{\#}(\alpha(\textstyle\bigcup_{i} R_i)) 
		= \alpha(f(\bigcup_{i} R_i))\\
		& = \alpha(\textstyle\bigcup_{i} f(R_i)) = \bigvee_{i} \alpha(f(R_i))\\
		& = \textstyle\bigvee_{i} f^{\#}(\alpha(R_i)) .  \qedhere
	\end{align*}
\end{proof}

\subsection{Well-structured quantum assertions}

Following the common practice of quantum Hoare logics in the literature, for the purpose of verification we only assume a semantic set of assertions $\a$ for quantum states and a \emph{satisfaction} relation $\models$ on $\dhv \times \a$. 
However, we do assume some structure of $\a$. Firstly, let a partial order $\leq_\a$ on $\a$ be defined as follows: $a\leq_\a a'$ iff for any $\rho\in \dhv$, $\rho\models a$ implies $\rho\models a'$. Furthermore, to describe conjunction and disjunction of assertions, we assume that $\a$ constitutes a complete lattice and let the meet and join be denoted by $\wedge$ and $\vee$, respectively. Finally, we make some assumptions on $\models$ to reflect the linear structure of $\dhv$.

\begin{definition}\label{def:wsassert}
		The complete lattice $\a$ as a set of quantum assertions for $\dhv$ is said to be \emph{well-structured}, whenever
	\begin{enumerate}
		\item if $\rho\models \qassert$ for all $\qassert\in A$ where $A\subseteq \a$, then $\rho\models \bigwedge A$; and 
		\item for any $a\in \a$, $\rho_i\in \dhv$, and $x_i >0$ with $\sum_{i}x_i\rho_i \in \dhv$,
		\begin{equation*}
			\sum_{i}x_i\rho_i\models a\quad \mbox{iff}\quad \forall i, \rho_i\models a.
		\end{equation*}
	\end{enumerate}
\end{definition}


\begin{example}\label{exm:assertallspaces}
	Recall the complete lattice
	\[
	(\shv, \subseteq, \vee, \cap, \{0\},  \h_V)
	\]
	 of all subspaces of $\h_V$	defined in Example~\ref{exm:allspaces}. We have shown that it is a well-structured abstract domain for quantum states in $\dhv$. Now we show that it can also serve as a well-structured set of quantum assertions by naturally defining the satisfaction relation $\models$ as follows: for any $\rho\in \dhv$ and $P\in \shv$,
	 \[
	 \rho\models P \quad \mbox{iff} \quad \supp{\rho}\subseteq P.
	 \]  
	 Firstly, for any subspaces $P$ and $Q$, $P\subseteq Q$ iff for any $\rho\in \dhv$, $\supp{\rho}\subseteq P$ implies $\supp{\rho}\subseteq Q$. Secondly,
	 if $\supp{\rho}\subseteq P$ for all $P\in A$ where $A$ is a set of subspaces in $\shv$, then obviously $\supp{\rho}\subseteq \bigwedge A$ as well. Finally, for any $P\in \shv$, $\rho_i\in \dhv$, and $x_i >0$ with $\sum_{i}x_i\rho_i \in \dhv$,
	 \begin{equation*}
	 	\supp{\sum_{i}x_i\rho_i} \subseteq P\quad \mbox{iff}\quad \forall i, \supp{\rho_i}\subseteq P.
	 \end{equation*}  
 Thus $\shv$ as a set of quantum assertions is indeed well-structured.
\end{example}
\subsection{Well-structured abstract domains v.s. well-structured assertions}

Now we examine the relationship between well-structured abstract  domains and assertion sets for quantum states. Firstly, we show how to transform a well-structured abstract quantum domain into a well-structured assertion set for quantum states.

\begin{lemma}\label{lem:wsadtoassert}
	Let $(\a, \leq_\a, \vee, \wedge, \bot, \top)$ be a well-structured abstract domain for quantum states, with $\alpha: \qconc\ra \a$ and $\gamma: \a\ra \qconc$ being the abstraction and concretisation functions respectively. Then the satisfaction relation $\models$ on $\dhv\times \a$, defined as for any $a\in \a$ and $\rho\in \dhv$,
	\[
	\rho\models a\quad \mbox{ iff }\quad \rho\in \gamma(a),
	\]
	turns $\a$ into a well-structured set of assertions for quantum states.
\end{lemma}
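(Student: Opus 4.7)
The plan is to verify the three requirements that make $\a$, with $\rho \models a$ defined as $\rho \in \gamma(a)$, a well-structured assertion set: the induced order must coincide with $\leq_\a$ (so the lattice structure is preserved), the meets in $\a$ must be compatible with $\models$, and the linearity condition (2) of Definition~\ref{def:wsassert} must hold. All three will follow from facts already established in the excerpt.

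First I would check that the order on $\a$ induced by $\models$ (namely $a \leq a'$ iff every $\rho \models a$ satisfies $\rho \models a'$) agrees with the original $\leq_\a$. Unfolding definitions, the induced order says $\gamma(a) \subseteq \gamma(a')$. Since $(\alpha,\gamma)$ is a Galois embedding, $\alpha \circ \gamma = \mathrm{id}_\a$ and $\gamma$ is injective and monotonic, so $\gamma(a) \subseteq \gamma(a')$ holds iff $a \leq_\a a'$. This guarantees the given lattice $(\a, \leq_\a, \vee, \wedge, \bot, \top)$ is the same as the lattice obtained from $\models$.

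Next I would establish condition (1) of Definition~\ref{def:wsassert}. Let $A \subseteq \a$ and suppose $\rho \models a$ for every $a \in A$, i.e.\ $\rho \in \gamma(a)$ for every $a \in A$. Since $\gamma$ is the right adjoint of a Galois connection between complete lattices, $\gamma$ preserves arbitrary meets; in particular $\gamma(\bigwedge A) = \bigcap_{a \in A} \gamma(a)$. Hence $\rho \in \gamma(\bigwedge A)$, i.e.\ $\rho \models \bigwedge A$. (If one prefers a direct argument, monotonicity of $\gamma$ gives $\gamma(\bigwedge A) \subseteq \bigcap_{a \in A} \gamma(a)$, and the other inclusion follows from the Galois property applied to the set $\bigcap_{a \in A}\gamma(a)$.)

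Finally, condition (2) of Definition~\ref{def:wsassert} is precisely clause (2) of Lemma~\ref{lem:spbased}, instantiated with the concretisation function $\gamma$ of the well-structured abstract domain $\a$. Since $\a$ is well-structured, that lemma applies, and we obtain $\sum_i x_i \rho_i \in \gamma(a)$ iff $\rho_i \in \gamma(a)$ for all $i$, which under the definition of $\models$ is exactly the required equivalence. This completes all three checks. There is no real obstacle here: the result is essentially a translation lemma, and the only subtlety is recognising that condition (2) of well-structuredness of assertions is literally what Lemma~\ref{lem:spbased}(2) already supplies, and that meet-preservation of $\gamma$ is automatic from the Galois embedding assumption.
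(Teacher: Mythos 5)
Your proposal is correct and follows essentially the same route as the paper: verify that the induced order coincides with $\leq_\a$ via the Galois embedding, derive condition (1) of Definition~\ref{def:wsassert} from the adjunction (your appeal to meet-preservation of $\gamma$ is just a packaged form of the paper's direct argument $\rho\in\gamma(a)\Leftrightarrow\alpha(\rho)\leq_\a a$ for all $a\in A$), and obtain condition (2) directly from Lemma~\ref{lem:spbased}(2).
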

\begin{proof}
	First, from the fact that $(\alpha, \gamma)$ forms a Galois embedding between $\qconc$ and $\a$, for any $a,a'\in \a$,
	\[
	a\leq_{\a} a' \quad \Leftrightarrow\quad \gamma(a)\subseteq \gamma(a')  \quad \Leftrightarrow\quad \forall \rho, \rho\models a \mbox{ implies }\rho\models a'.
	\]
	We now prove that the two conditions in Definition~\ref{def:wsassert} are satisfied. To show (1), we note that for any $A\subseteq \a$ and $\rho\in \dhv$,
	\begin{align*}
		\forall a\in A, \rho \models a \quad &  \Rightarrow \quad	\forall a\in A, \rho \in \gamma(a)\\
		& \Rightarrow \quad \forall a\in A, \alpha(\rho) \leq_\a a \hspace{4em} \text{(Galois connection)}\\
		& \Rightarrow \quad \alpha(\rho) \leq_\a \textstyle\bigwedge A\\
		& \Rightarrow \quad \rho \in \gamma\left(\textstyle\bigwedge A\right) \hspace{6.6em} \text{(Galois connection)}\\
		& \Rightarrow \quad \rho\models \textstyle\bigwedge A.
	\end{align*}
	
	Furthermore, (2) is directly from Lemma~\ref{lem:spbased}(2).
\end{proof}

Conversely, a well-structured set of quantum assertions can be easily transformed into a well-structured abstract domain for quantum states as well.

\begin{lemma}\label{lem:wsasserttoad}
	Let $(\a, \leq_\a, \vee, \wedge, \bot, \top)$ be a well-structured set of quantum assertions, with $\models$ on $\dhv\times \a$ being the satisfaction relation. Then the pair of functions $\alpha: \qconc\ra \a$ and $\gamma: \a\ra \qconc$, defined as for any $a\in \a$ and $R\subseteq \dhv$,
	\begin{align*}
		\gamma(a) &\define \left\{\rho\in \dhv: \rho\models a\right\}\\
		\alpha(R) &\define \bigwedge \{b\in \a: R\subseteq \gamma(b)\}.
	\end{align*}
	turn $\a$ into a well-structured abstract domain for quantum states.
\end{lemma}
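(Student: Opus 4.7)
The plan is to verify both defining conditions of Definition~\ref{def:wsdomain} for the pair $(\alpha,\gamma)$, namely that it forms a Galois embedding between $\qconc$ and $\a$, and that $\alpha$ satisfies the linearity condition~\eqref{eq:wels}. The guiding observation throughout is that by the very definition of $\leq_\a$ on the assertion side, $a \leq_\a a'$ holds if and only if $\gamma(a) \subseteq \gamma(a')$. This immediately gives the monotonicity of $\gamma$, and the monotonicity of $\alpha$ follows because enlarging $R$ can only shrink the set $\{b \in \a : R \subseteq \gamma(b)\}$ over which the meet is taken. A small but necessary sanity check is that this set is always nonempty so that $\alpha(R)$ is well defined; this is handled by applying condition~(1) of Definition~\ref{def:wsassert} vacuously to the empty family, which forces every $\rho$ to satisfy $\top = \bigwedge \emptyset$ and hence $\gamma(\top) = \dhv$.

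For the Galois connection, the direction $R \subseteq \gamma(a) \Rightarrow \alpha(R) \leq_\a a$ is immediate, since $a$ then belongs to the set whose meet is $\alpha(R)$. For the converse I would first establish $R \subseteq \gamma(\alpha(R))$ by applying condition~(1) of well-structured assertions pointwise: for any $\rho \in R$ and any $b$ with $R \subseteq \gamma(b)$ we have $\rho \models b$, hence $\rho \models \bigwedge\{b : R \subseteq \gamma(b)\} = \alpha(R)$. Combined with the monotonicity of $\gamma$, this yields $R \subseteq \gamma(a)$ whenever $\alpha(R) \leq_\a a$. To upgrade the connection to an embedding I would check $\alpha \circ \gamma = \mathrm{id}_\a$: the inequality $\alpha(\gamma(a)) \leq_\a a$ is trivial, while for $a \leq_\a \alpha(\gamma(a))$ any $b$ appearing in the defining meet satisfies $\gamma(a) \subseteq \gamma(b)$, which by the guiding observation forces $a \leq_\a b$, so $a$ is a lower bound of the set being intersected.

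For the linearity condition my intention is to bypass~\eqref{eq:wels} directly and instead invoke the equivalence (1)$\Leftrightarrow$(2) of Lemma~\ref{lem:spbased}, which recasts well-structuredness of an abstract domain entirely in terms of $\gamma$: it suffices to show that for every $a \in \a$ and every $\rho_i \in \dhv$ and $x_i > 0$ with $\sum_i x_i \rho_i \in \dhv$, one has $\sum_i x_i \rho_i \in \gamma(a) \Leftrightarrow \forall i,\ \rho_i \in \gamma(a)$. Unfolding $\gamma$ through $\models$, this is precisely condition~(2) of Definition~\ref{def:wsassert}, so nothing further is required. I do not expect a real obstacle here: the proof is essentially a dualisation of Lemma~\ref{lem:wsadtoassert} and amounts to tracing the two defining properties of well-structured assertions through the canonical concretisation $\gamma$, with the only mild subtlety being the reconciliation between the semantic order $\leq_\a$ and the lattice-theoretic meet used to define $\alpha$.
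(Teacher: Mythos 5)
Your proposal is correct and follows essentially the same route as the paper's proof: monotonicity from the consistency of $\models$ with $\leq_\a$, the Galois connection via condition (1) of Definition~\ref{def:wsassert} (the paper argues the converse direction pointwise where you go through $R\subseteq\gamma(\alpha(R))$ and monotonicity of $\gamma$, but these are the same argument), and the linearity condition deferred to Lemma~\ref{lem:spbased}. Your spelled-out verification of $\alpha\circ\gamma=\mathrm{id}_\a$ fills in a step the paper only asserts, but introduces no new idea.
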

\begin{proof}
	We have to prove that the two conditions in Definition~\ref{def:wsdomain} are satisfied. First, from the assumption that $\models$ is consistent with $\leq_\a$, it is easy to prove that both $\gamma$ and $\alpha$ are monotonic functions. Second, to show $(\alpha, \gamma)$ forms a Galois connection between $\qconc$ and $\a$, it suffices to prove that for any $a\in \a$ and $R\subseteq \dhv$, $R\subseteq \gamma(a)$ iff $\alpha(R)\leq_\a a$.
		Let $A_R = \{b\in \a: R\subseteq \gamma(b)\}$. Then	
		\begin{align*}
			R\subseteq \gamma(a) \quad & \Rightarrow \quad a\in A_R\\
			& \Rightarrow \quad \alpha(R) = \bigwedge A_R \leq_\a a.
		\end{align*}
		Conversely, suppose $\alpha(R)\leq_\a a$. For any $\rho\in R$ and $b\in A_R$, we have $\rho\in \gamma(b)$, so $\rho\models b$. Thus from the fact that $\a$ as an assertion set is well-structured, $\rho\models \bigwedge A_R=\alpha(R)$ as well. By the assumption $\alpha(R)\leq_\a a$, we have $\rho\models a$, and so $\rho\in \gamma(a)$ as desired. Finally, for any $a$, we can show that $\alpha(\gamma(a)) =a$ from the fact that the satisfaction relation $\models$ is consistent with the partial order $\leq_\a$.
		
	 From the fact that $(\alpha, \gamma)$ forms a Galois embedding between $\qconc$ and $\a$, the remaining part of the proof is then directly from Lemma~\ref{lem:spbased}.
\end{proof}

\begin{example}
	We have already shown in Examples~\ref{exm:allspaces} and \ref{exm:assertallspaces} that the complete lattice
	\[
	(\shv, \subseteq, \vee, \cap, \{0\},  \h_V)
	\]
	can be both a well-structured abstract domain and a well-structured set of assertions  for quantum states. Actually, it can be easily seen that the Galois embedding $(\alpha_s, \gamma_s)$ defined in Example~\ref{exm:allspaces} and the satisfaction relation $\models$ defined in Example~\ref{exm:assertallspaces} satisfy the transformations stated in Lemmas~\ref{lem:wsadtoassert} and \ref{lem:wsasserttoad}.
\end{example}

%

\section{Quantum Hoare logic v.s. Abstract Interpretation}
\label{sec:hlvsai}

This section is devoted to the relationship between Hoare/incorrectness logic and abstract interpretation for quantum programs written in the while-language presented in Sec.~\ref{sec:qlang}.

\subsection{Hoare logic induced by abstract interpretation}

Let $\a$ be a well-structured abstract domain of program states, and an abstract monotonic operator $\asem{e}$ be defined for each basic command $e \in \{\sskip,\ \bar{q} := |0\>,\ \bar{q}\apply U,\ \assert{P[\bar{q}]}\}$. Then a Hoare-type proof system is naturally induced as follows:
\begin{enumerate}
	\item Take $\a$ to be the set of assertions. Furthermore, for any $\rho\in \d(\h_V)$ and $a\in \a$, $\rho\models a$ iff $\rho\in \gamma(a)$. From Lemma~\ref{lem:wsadtoassert}, $\a$ as a set of assertions is also well-structured.
	\item A correctness formula $\ass{\qassert}{S}{\qassertp}$ is valid, denoted $\models\ass{\qassert}{S}{\qassertp}$, if  $\sem{S}(\gamma(\qassert)) \subseteq \gamma(\qassertp)$; that is, $\gamma(\qassertp)$ is an over-approximation of $\sem{S}(\gamma(\qassert))$. From the assumption that $(\alpha, \gamma)$ forms a Galois embedding, this is equivalent to $\bsem{S}(a) \leq_\a b$ where $\bsem{S} = \alpha\circ\sem{S}\circ \gamma$ is the best abstraction of $\sem{S}$ in $\a$.
	\item The proof system (for partial correctness) is presented as in Table~\ref{tbl:psystem}. A correctness formula $\ass{\qassert}{S}{\qassertp}$ is said to be derivable, denoted $\vdash \ass{\qassert}{S}{\qassertp}$, if it has a proof sequence in the logic system. 
\end{enumerate}

Recall that such a proof system is said to be \emph{sound} if 
$\vdash\ass{\qassert}{S}{\qassertp}$ implies $\models\ass{\qassert}{S}{\qassertp}$
for any correctness formula $\ass{\qassert}{S}{\qassertp}$; while it is \emph{relatively complete} if the other direction of implication holds. 
The following theorem gives a close relationship between an abstract interpretation and the Hoare-type logic system induced by it. 

{\renewcommand{\arraystretch}{2.5}
	\begin{table}[t]
		\begin{lrbox}{\tablebox}
			\centering
			\begin{tabular}{l}
				\begin{tabular}{lc}
					(Exp)	&\qquad \qquad $\ass{\qassert}{e}{ \asem{e}(\qassert)}$\qquad where $e \in \{\sskip,\ \bar{q} := |0\>,\ \bar{q}\apply U,\ \assert{P[\bar{q}]}\}$
				\end{tabular}\\
				\begin{tabular}{lclc}
					(Seq)	&
					$\displaystyle\frac{\ass{\qassert}{S_0}{\qassert'},\ \ass{\qassert'}{S_1}{\qassertp}}{\ass{\qassert}{S_0; S_1}{\qassertp}}$ &
					(Meas)	&
					$\displaystyle\frac{\ass{\qassert}{\assert P[\bar{q}]; S_1}{\qassertp_1},\  \ass{\qassert}{\assert P^\bot[\bar{q}]; S_0}{\qassertp_0}}{\ass{\qassert}{\pmstm}{\qassertp_0\vee \qassertp_1}}$ \\
					(Imp)	&
					$\displaystyle\frac{\qassert\leq_\a \qassert',\ \ass{\qassert'}{S}{\qassertp'},\ \qassertp'\leq_\a \qassertp}{\ass{\qassert}{S}{\qassertp}}$&(While)	& $\displaystyle\frac{\ass{\qassert}{\assert P[\bar{q}]; S}{\qassert},\ \ass{\qassert}{\assert P^\bot[\bar{q}]}{\qassertp}}
					{\ass{\qassert}{\pwstm}{\qassertp}}$
				\end{tabular}\\
			\end{tabular}
		\end{lrbox}
		\resizebox{\textwidth}{!}{\usebox{\tablebox}}\\
		\vspace{4mm}
		\caption{Proof system for partial correctness induced by abstraction domain $(\a, \leq_\a)$. 
		}
		\label{tbl:psystem}
	\end{table}
}

\begin{theorem}\label{thm:aitohl}
	Let $\a$ be a well-structured abstract domain of quantum states.
	\begin{enumerate}
		\item If the abstract operator $\asem{e}$ is sound for each basic command $e$, then the induced proof system presented in Table~\ref{tbl:psystem} is sound. 
		\item If the abstract operator $\asem{e}$ is complete for each basic command $e$, then the induced proof system is both sound and relatively complete.
	\end{enumerate}
\end{theorem}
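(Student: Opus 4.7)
The plan is a standard structural induction in two directions: soundness by induction on derivations, and relative completeness by constructing canonical derivations whose post-condition is the inductively defined abstract operator $\asem{S}$ from Section~\ref{sec:wsad}.

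For clause (1), soundness, I would proceed by induction on the derivation of $\vdash\ass{a}{S}{b}$. The (Exp) rule is immediate: by soundness of $\asem{e}$ we have $\alpha(\sem{e}(\gamma(a))) \leq_\a \asem{e}(\alpha(\gamma(a))) = \asem{e}(a)$, where the equality uses that $(\alpha,\gamma)$ is a Galois embedding, and the Galois connection then yields $\sem{e}(\gamma(a)) \subseteq \gamma(\asem{e}(a))$. The (Seq) case uses monotonicity of the concrete semantics, and (Imp) uses monotonicity of $\gamma$. The (Meas) case is discharged by Lemma~\ref{lem:spmw}, which expresses $\alpha\circ\sem{\pmstm}$ as a join of the two branches. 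The (While) case requires showing that the invariant $a$ is preserved under arbitrary iteration of $\assert P[\bar{q}]; S$: a simple induction on $i$ lifts the one-step premise to $\sem{(\assert P[\bar{q}]; S)^i}(\gamma(a)) \subseteq \gamma(a)$, after which the second premise bounds each summand of $\sem{\pwstm}(\rho)$ by an element of $\gamma(b)$; the sum again lies in $\gamma(b)$ because well-structuredness of $\a$ (via Lemma~\ref{lem:spbased}) makes $\gamma(b)$ closed under convex combinations, positive scaling, and $\omega$-limits.

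For clause (2), relative completeness, completeness of the basic abstractions lifts to every composite program by Theorem~\ref{thm:basicsuff}(2), so $\asem{S}(a) = \alpha(\sem{S}(\gamma(a)))$ holds for every $S$ and $a$. The key lemma I would prove by structural induction on $S$ is that $\vdash\ass{a}{S}{\asem{S}(a)}$ is derivable for every $a\in\a$. Given this, if $\models\ass{a}{S}{b}$ then $\sem{S}(\gamma(a))\subseteq\gamma(b)$, so $\asem{S}(a) = \alpha(\sem{S}(\gamma(a))) \leq_\a b$, and a single application of (Imp) yields $\vdash\ass{a}{S}{b}$. The basic commands follow from (Exp), the composition and measurement cases from (Seq), (Meas) together with the inductive definition of $\asem{S}$, and the (Imp) case is trivial.

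The expected main obstacle is the while case of this canonical-derivation lemma. For $S \equiv \pwstm$, the natural invariant is $a_\infty \define \bigvee_{i\geq 0}\asem{(\assert P[\bar{q}]; S)^i}(a)$, which clearly satisfies $a \leq_\a a_\infty$. To discharge the first premise of (While), I would invoke Lemma~\ref{lem:fshapcon} (applicable because $\asem{\assert P[\bar{q}]; S}$ is complete, by Theorem~\ref{thm:basicsuff}(2)) to commute the abstract operator with the join, obtaining $\asem{\assert P[\bar{q}]; S}(a_\infty) = \bigvee_{i\geq 0}\asem{(\assert P[\bar{q}]; S)^{i+1}}(a) \leq_\a a_\infty$; combined with the inductive hypothesis and (Imp) this gives $\vdash\ass{a_\infty}{\assert P[\bar{q}]; S}{a_\infty}$. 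The second premise is handled analogously: Lemma~\ref{lem:fshapcon} yields $\asem{\assert P^\bot[\bar{q}]}(a_\infty) = \bigvee_{i\geq 0}\asem{(\assert P[\bar{q}]; S)^i; \assert P^\bot[\bar{q}]}(a) = \asem{\pwstm}(a)$. Applying (While) followed by (Imp) with $a \leq_\a a_\infty$ closes the case. The delicate point is thus precisely the commutation of the complete abstraction with the infinite join defining the loop semantics, which is where the completeness hypothesis is genuinely used beyond mere soundness.
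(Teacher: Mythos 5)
Your proposal is correct and follows essentially the same route as the paper: soundness by induction on derivations using Lemma~\ref{lem:spmw} for the branching constructs, and relative completeness via the canonical derivations $\vdash\ass{a}{S}{\asem{S}(a)}$ with the loop invariant $\bigvee_{i\geq 0}(\asem{\assert P[\bar{q}];S})^i(a)$ discharged by Lemma~\ref{lem:fshapcon}. The only (harmless) deviation is that in the soundness case for (While) you argue at the concrete level via the closure properties of $\gamma(b)$ from Lemma~\ref{lem:spbased}, whereas the paper stays at the abstract level and bounds $\bigvee_i\bsem{(\assert P[\bar{q}];S)^i;\assert P^\bot[\bar{q}]}(a)$ directly via Lemma~\ref{lem:spmw}.
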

\begin{proof}
	For the first part, we have to prove that whenever $\vdash \ass{\qassert}{S}{\qassertp}$, then $\sem{S}(\gamma(\qassert)) \subseteq \gamma(\qassertp)$ or equivalently, $\bsem{S}(a) \leq_\a b$. This can be done by induction on the proof length of $\vdash \ass{\qassert}{S}{\qassertp}$. For the last step of the proof, we have the following cases:
	\begin{enumerate}
		\item Rule (Exp) is used. In this case, $S\equiv e$ for some basic command $e$, and $\qassertp \equiv \asem{e}(\qassert)$. The result then follows from the assumption that $\asem{e}$ is sound; that is, $\sem{e}(\gamma(a))\subseteq \gamma(\asem{e}(a))$.
		
		\item Rule (Seq) is used. By induction, $\sem{ S_0}(\gamma(\qassert))\subseteq \gamma(\qassert')$ and 
		$\sem{S_1}(\gamma(\qassert'))\subseteq \gamma(b)$. Thus
		\[
		\sem{S_0;S_1}(\gamma(\qassert)) =\sem{S_1}(\sem{S_0}
	(\gamma(\qassert)))  \subseteq \gamma(b).
		\]
		
		\item Rule (Imp) is used. Then the result follows from the monotonicity of $\bsem{S}$ for all program $S$.
				
		\item Rule (Meas) is used. Let $T_1 = \assert P[\bar{q}]; S_1$ and $T_0= \assert P^\bot[\bar{q}]; S_0$. By induction, we have $\bsem{ T_i}(\qassert)\leq_\a b_i$ for $i=0,1$. Thus from Lemma~\ref{lem:spmw}, 
		\[
		\bsem{\pmstm}(a) = \bsem{ T_0}(\qassert)\vee \bsem{ T_1}(\qassert)\leq_\a b_0\vee b_1.
		\]
		
			\item Rule (While) is used. From induction hypothesis, we have 
			\[
			\bsem{\assert P[\bar{q}]; S}(a)\leq_\a a, \quad \bsem{\assert P^\bot[\bar{q}]}(a)\leq_\a \qassertp.\]
			Let $T_i = (\assert P[\bar{q}]; S)^i; \assert P^\bot[\bar{q}]$ where $i=0,1,\ldots$. By induction on $i$ we can show
			$
			\bsem{T_i}(a)\leq_\a b$ for all $i\geq 0$.
			 Thus from Lemma~\ref{lem:spmw},  
			 		\[
			 \bsem{\pwstm}(a) = \bigvee_{i\geq 0} \bsem{T_i}(a) \leq_\a b.
			 \]
	\end{enumerate}

For the second part, suppose $\asem{e}$ is complete for any basic command $e$. Then from Theorem~\ref{thm:basicsuff}, the induced abstract operator $\asem{S}$ is complete for any quantum program $S$. Thus it must be the best abstraction; that is $\asem{S} = \bsem{S}$. Now we have to show that whenever 
$\bsem{S}(\qassert) \leq_\a \qassertp$, then $\vdash \ass{\qassert}{S}{\qassertp}$. 
From Rule (Imp), it suffices to show 
\begin{equation*}
\vdash \ass{\qassert}{S}{\bsem{S}(\qassert)}
\end{equation*}
 by induction on the structure of $S$. The basis case where $S\equiv e$ for some $e \in \{\sskip,\ \bar{q} := |0\>,\ \bar{q}\apply U,\ \assert{P[\bar{q}]}\}$ is directly from Rule (Exp). For other cases,
	\begin{enumerate}
	\item $S\equiv S_0;S_1$. This follows from the fact that complete abstractions are closed under operator composition; that is, $\bsem{S_0;S_1} = \bsem{S_1}\circ\bsem{S_0}$.
		
	\item $S\equiv \pmstm$. Let $T_1 = \assert P[\bar{q}]; S_1$ and $T_0= \assert P^\bot[\bar{q}]; S_0$. Then by induction, we have
	\[
	\vdash \ass{\qassert}{T_1}{\bsem{T_1}(\qassert)}, \qquad 	\vdash \ass{\qassert}{T_0}{\bsem{T_0}(\qassert)}.
	\] 
	Furthermore, from Lemma~\ref{lem:spmw},
	\begin{align*}
		\bsem{S}(a)  & = \alpha\circ \sem{T_1}\circ \gamma(a) \vee \alpha\circ \sem{T_0} \circ \gamma(a) = \bsem{T_1}(a) \vee \bsem{T_0}(a).
	\end{align*}
Thus the result follows from Rule (Meas).
	
	\item $S\equiv \pwstm$. Let $T = \assert P[\bar{q}]; S$, $T_0 = \assert P^\bot[\bar{q}]$, and $a^* = \bigvee_{i\geq 0} (\bsem{T})^i(a)$. Then by induction, 
	\[
	\vdash \ass{a^*}{T}{\bsem{T}(a^*)}, \qquad \vdash \ass{a^*}{T_0}{\bsem{T_0}(a^*)}.
	\]	
	From Lemma~\ref{lem:fshapcon}, we have $\bsem{T}(a^*) = \bigvee_{i\geq 0} ( \bsem{T})^{i+1}(a)\leq_\a a^*$ and 
	\[
		\bsem{T_0}(a^*) =  \bigvee_{i\geq 0} \bsem{T_0}\circ (\bsem{T})^i(a) = \bsem{S}(a)
	\] 
	where the second equality is from Lemma~\ref{lem:spmw}.
	The result then follows from Rules (Imp), (While), and the fact that $a\leq_\a a^*$. \qedhere
\end{enumerate}
\end{proof}

\begin{example}[Hoare logic induced by subspace abstract interpretation]\label{exm:applied}		
	
	Recall from Examples~\ref{exm:allspaces} and~\ref{exm:soiscomplete} that the subspace abstract domain $\shv$ for quantum states is well-structured, and the abstract operators for basic commands defined in Eq.~\eqref{eq:abasic} are complete. Thus by Theorems~\ref{thm:aitohl} the induced Hoare-type proof system presented in Table~\ref{tbl:psystem} is both sound and relatively complete when assertions are taken from $\shv$. 
	
	Note that the applied quantum Hoare logic proposed in~\cite{zhou2019applied} also uses elements of $\shv$ as assertions. 
	A correctness formulas $\ass{P}{S}{Q}$ is correct (with respect to partial correctness), denoted $\models_{\text{ap}} \ass{P}{S}{Q}$, if for any $\rho\in \dhv$, whenever $\supp{\rho} \subseteq P$, $\supp{\sem{S}(\rho)} \subseteq Q$. Using the concretisation function $\gamma_s$ from $\shv$ to $\qconc$ defined in Example~\ref{exm:allspaces}, this is equivalent to $\sem{S}(\gamma_s(P)) \subseteq \gamma_s(Q)$, coinciding with our correctness definition. 
	
	We now compare our proof system with the applied quantum Hoare logic. First,	let us examine the inference rules for initialisation $\bar{q}:=|0\>$:
	\[
	\text{(Init)}\quad \displaystyle \ass{P}{\bar{q}:=|0\>}{|0\>_{\bar{q}}\<0|\otimes \supp{\tr_{\bar{q}}(P)}}
	\qquad \quad
	\text{(Init-ap)}\quad \displaystyle \ass{I_{\bar{q}}\otimes f(Q)}{\bar{q}:=|0\>}{Q}.
	\]
	The left one is from our system, while the right one is taken from~\cite{zhou2019applied} where
	  \[ f(Q) \define \textstyle \bigvee \{T\in \shv : |0\>_{\bar{q}}\<0|\otimes T\subseteq Q\}.\]
	  We now prove that with the help of Rule (Imp), these two rules are indeed equivalent:
	  \begin{enumerate}
	  	\item (Init) $\Rightarrow$ (Init-ap). Suppose $Q$ is given. Let $P \define I_{\bar{q}}\otimes f(Q)$. Then $\supp{\tr_{\bar{q}}(P)}= f(Q)$. Thus from (Init) we have $\ass{P}{\bar{q}:=|0\>}{|0\>_{\bar{q}}\<0|\otimes f(Q)}$, and so (Init-ap) follows from (Imp) by noting that $|0\>_{\bar{q}}\<0|\otimes f(Q)\subseteq Q$.
	  	\item (Init-ap) $\Rightarrow$ (Init). Suppose $P$ is given. Let $Q \define |0\>_{\bar{q}}\<0|\otimes \supp{\tr_{\bar{q}}(P)}$. Then $f(Q)=\supp{\tr_{\bar{q}}(P)}$. Thus from (Init-ap) we have $\ass{I_{\bar{q}}\otimes \supp{\tr_{\bar{q}}(P)}}{\bar{q}:=|0\>}{Q}$, and so (Init) follows from (Imp) by noting that $P\subseteq I_{\bar{q}}\otimes \supp{\tr_{\bar{q}}(P)}$.
	  \end{enumerate}
  	 
  	 It is evident that the proof system of~\cite{zhou2019applied} works in a backward manner; that is, it tries to give \emph{the weakest precondition} of a postcondition. In contrast, our logic system in Table~\ref{tbl:psystem} works in a forward manner by trying to give \emph{the strongest postcondition} of a precondition. 
	  Due to this complementary nature, we believe that these two systems will be useful in different applications. 
\end{example}


Next, let us examine the abstract interpretation for quantum circuits proposed in~\cite{yu2021quantum} where tuples of subspaces of some subsystems, instead of subspaces of the whole system, are regarded as abstract elements for quantum states. 

\begin{example}[Local-subspace abstract domain]\label{exm:yuai}
	Let a signature $\sigma$ be a tuple of proper subsets of $V$; formally, $ \sigma \define (s_1, \cdots, s_m)$ where $m\geq 1$ and for each $i$, $s_i \subset V$. Then the abstract subspace domain with signature $\sigma$ is given by
	\[
	(\shv_\sigma, \sqsubseteq_\sigma, \sqcup_\sigma, \sqcap_\sigma, \bot_\sigma, \top_\sigma)
	\]
	where 
	\[
	\shv_\sigma \define \left\{(P_1, \cdots, P_m) : \forall i, P_i \mbox{ is a subspace of } \h_{s_i}\right\},
	\]
	the partial order $\sqsubseteq_\sigma$ and the lattice operators $\sqcup_\sigma$ and $\sqcap_\sigma$ are all defined in the entry-wise way, $\bot_\sigma \define (0_{s_1}, \cdots, 0_{s_m})$, and $\top_\sigma\define (I_{s_1}, \cdots, I_{s_m})$. Again, this domain is a complete lattice. When each $s_i$ contains exactly two quantum variables, this is similar to the octagon domain for classical programs.
	
	The abstraction and concretisation functions are defined naturally as follows.
	For any $\widetilde{P} \define (P_1, \cdots, P_m)\in \shv_{\sigma}$ and $R\in \qconc$,
	\begin{align*}
		\alpha_\sigma(R) & \define (Q_1, \cdots, Q_m),
\mbox{ where $ Q_i \define \bigvee\left\{\supp{\rho_i} : \rho \in R\right\}$ and }\\
&\hspace{7.5em}  \rho_i \define \tr_{V\backslash s_i}(\rho) \mbox{ is the reduced state of $\rho$ in the subsystem $s_i$;}\\
		\gamma_\sigma(\widetilde{P}) & \define \bigcap_{i=1}^m \gamma_s(P_i) = \bigcap_{i=1}^m \left\{\rho\in \dhv : \supp{\rho}\subseteq P_i\otimes I_{V\backslash s_i}\right\}.
	\end{align*}
	Intuitively, $\alpha_\sigma(R)$ is the tuple of subspaces in which each component subspace is spanned by all the quantum states in $R$ restricted on the corresponding subsystems, while $\gamma_\sigma(\widetilde{P})$ is collectively determined by all the local subspaces $P_i$ in $\widetilde{P}$ when $P_i$ is regarded as a subspace of the whole quantum system $V$. In other words, $\shv_\sigma$ is essentially the direct product of the individual abstract domains $\s(\h_{s_i})$.
	
	We now show that for each signature $\sigma$,  $\shv_\sigma$ is well-structured. First, for any $\widetilde{P}\in \shv_{\sigma}$ and $R\in \qconc$,
	\begin{align*}
		\alpha_\sigma(R) \le_\sigma \widetilde{P}\quad &\Leftrightarrow \quad 
		\forall i, \textstyle \bigvee\left\{\supp{\rho_i} : \rho \in R\right\} \subseteq P_i\\
		 &\Leftrightarrow\quad \forall i, \forall \rho\in R,  \supp{\tr_{V\backslash s_i}(\rho)}\subseteq P_i\\
		 &\Leftrightarrow\quad \forall \rho\in R, \forall i,   \supp{\rho}\subseteq P_i\otimes I_{V\backslash s_i}\\	
		 &\Leftrightarrow \quad  R\subseteq \gamma_\sigma(\widetilde{P})		 
	\end{align*}
	where the third equivalence follows from the fact that $\supp{\rho} \subseteq \supp{\tr_{V\backslash s_i}(\rho)}\otimes I_{V\backslash s_i}$ and 
	$\supp{\tr_{V\backslash s_i}(\supp{\rho})} = \supp{\tr_{V\backslash s_i}(\rho)}$
	 for all $\rho\in \dhv$. Furthermore, $\alpha_\sigma$ is surjective. Thus the pair $(\alpha_\sigma, \gamma_\sigma)$ forms a Galois embedding between $\qconc$ and $\shv_\sigma$. Second, for any $k=1, \cdots, m$, $\rho_i\in \dhv$, and $x_i >0$ with $\sum_{i}x_i\rho_i \in \dhv$, 
	 \begin{equation*}
	 	\left(\alpha_\sigma\left(\sum_{i}x_i\rho_i\right) \right)_k= \supp{\tr_{V\backslash s_k}\left(\sum_{i}x_i\rho_i\right)} =\bigvee_i \supp{\tr_{V\backslash s_k}\left(\rho_i\right)} = \left(\bigvee_i \alpha_\sigma(\rho_i)\right)_k.
	 \end{equation*}  	 
	 
	It is obvious that the best abstraction $\sem{S}^b_\sigma = \alpha_\sigma\circ\sem{S}\circ \gamma_\sigma$ is sound for any quantum program $S$. Thus from Theorem~\ref{thm:aitohl}, the proof system presented in Table~\ref{tbl:psystem}, when $\asem{e}$ in Rule (Exp) is replaced by $\sem{e}^b_\sigma$, is sound for quantum assertions taken from $\shv_{\sigma}$.
	 However, as the following counter-example shows, $\sem{S}^b_\sigma$ is in general not a complete abstraction for $\sem{S}$. For simplicity, we consider the case where $V = \{q_1, q_2, q_3\}$ and $\sigma = (\{q_1, q_2\}, \{q_2, q_3\})$. Let $|\Phi^{\pm}\> \define \frac{1}{\sqrt{2}}(|000\> \pm |111\>)$, $U$ be a unitary operator on $\dhv$ which maps $|\Phi^+\>$ to $|000\>$ and $|\Phi^-\>$ to $|111\>$,  and $S\define  
	 q_1,q_2,q_3\apply U$. Let $\Phi^+ = |\Phi^+\>\<\Psi^+|$. Note that $\Phi^+$ can be regarded as either the density operator corresponding to the pure state $ |\Phi^+\>$ or the projector onto the one dimensional subspace spanned by $ |\Phi^+\>$.
	 Then
	 \[
	 \alpha_\sigma\circ \sem{S}(\Phi^+)  = (P^{00}_{q_1,q_2}, P^{00}_{q_2,q_3}) \quad\mbox{ and } \quad \alpha_\sigma(\Phi^+) = (P^{=}_{q_1,q_2}, P^{=}_{q_2,q_3})
	 \]
	 where $P^{ii} = \spann\{|ii\>\}$, $i=0,1$, and $P^{=} = \spann\{|00\>, |11\>\}$. However, since $\Phi^- \in \gamma_\sigma(\alpha_\sigma(\Phi^+) )$, we have
	 $$|111\>\<111|\in  \sem{S}\circ \gamma_\sigma\circ \alpha_\sigma(\Phi^+).$$
	 Thus if $
	 \sem{S}^b_\sigma \circ \alpha_\sigma(\Phi^+) = (Q_1, Q_2)$ then it must hold that
	 $|11\> \in Q_1$ and $|11\> \in Q_2$. Consequently, $\alpha_\sigma\circ \sem{S}(\Phi^+)  \neq  \sem{S}^b_\sigma \circ \alpha_\sigma(\Phi^+) $ as desired.
	 
	 As $\sem{S}^b_\sigma$ is not a complete, Theorem~\ref{thm:aitohl} tells us nothing about the completeness of the induced Hoare system. Actually, we are going to show in Sec.~\ref{subsec:hltoai} that there does not exist a relatively complete Hoare system which takes  $\shv_\sigma$ as the set of assertions.
\end{example}

\subsection{Incorrectness logic induced by abstract interpretation}

In the previous section, we show how an abstract interpretation of quantum programs is closely related to a quantum Hoare logic induced by it. Interestingly, a quantum incorrectness logic system can also be induced by a quantum abstract interpretation.

Let $\a$ be a well-structured abstract domain of program states, and an abstract monotonic operator $\asem{e}$ be defined for each basic command $e \in \{\sskip,\ \bar{q} := |0\>,\ \bar{q}\apply U,\ \assert{P[\bar{q}]}\}$. Then an incorrectness proof system is naturally induced as follows:
\begin{enumerate}
	\item Take $\a$ to be the set of assertions, and the satisfaction relation is similarly defined as for the Hoare logic case in the previous section. Again, $\a$ as a set of assertions is also well-structured.
	\item A specification formula $\iass{\qassert}{S}{\qassertp}$ holds, denoted $\models_{\text{in}}\iass{\qassert}{S}{\qassertp}$, if $b\leq_\a \bsem{S}(a)$. That is, $b$ is an under-approximation of 	$\alpha(\sem{S}\circ\gamma(a))$, the abstraction for the set of reachable states of $S$ starting in $\gamma(a)$.
	\item The proof system is presented as in Table~\ref{tbl:ipsystem}. Denote by $\vdash_{\text{in}}\iass{\qassert}{S}{\qassertp}$ if the formula $\iass{\qassert}{S}{\qassertp}$ can be deduced from it.
\end{enumerate}

Compared with the Hoare logic system in Table~\ref{tbl:psystem}, the main difference lies in the consequence rule. In Rule (Imp), we strengthen preconditions and weaken postconditions, while in Rule (Imp-In) we weaken preconditions and strengthen postconditions. This is due to the fact that Hoare logic reasons about over-approximation while incorrectness logic reasons about under-approximation of program semantics.

%

{\renewcommand{\arraystretch}{2.5}
	\begin{table}[t]
		\begin{lrbox}{\tablebox}
			\centering
			\begin{tabular}{l}
				\begin{tabular}{lc}
					(Exp-In)	&\qquad \qquad $\iass{\qassert}{e}{ \asem{e}(\qassert)}$\qquad where $e \in \{\sskip,\ \bar{q} := |0\>,\ \bar{q}\apply U,\ \assert{P[\bar{q}]}\}$
				\end{tabular}\\
				\begin{tabular}{lclc}
					(Seq-In)	&
					$\displaystyle\frac{\iass{\qassert}{S_0}{\qassert'},\ \iass{\qassert'}{S_1}{\qassertp}}{\iass{\qassert}{S_0; S_1}{\qassertp}}$ & \quad
					(Meas-In)	&
					$\displaystyle\frac{\iass{\qassert}{\assert P[\bar{q}]; S_1}{\qassertp_1},\  \iass{\qassert}{\assert P^\bot[\bar{q}]; S_0}{\qassertp_0}}{\iass{\qassert}{\pmstm}{\qassertp_0\vee \qassertp_1}}$ \\
					(Imp-In)	&
					$\displaystyle\frac{\qassert'\leq_\a \qassert,\ \iass{\qassert'}{S}{\qassertp'},\ \qassertp\leq_\a \qassertp'}{\iass{\qassert}{S}{\qassertp}}$&\quad(While-In)	& $\displaystyle\frac{\forall i, \iass{\qassert_i}{\assert P[\bar{q}]; S}{\qassert_{i+1}},\ \iass{\qassert_i}{\assert P^\bot[\bar{q}]}{\qassertp_i}}
					{\iass{\qassert_0}{\pwstm}{\bigvee_{i\geq 0} \qassertp_i}}$
				\end{tabular}\\
			\end{tabular}
		\end{lrbox}
		\resizebox{\textwidth}{!}{\usebox{\tablebox}}\\
		\vspace{4mm}
		\caption{Proof system for incorrectness logic induced by abstract domain $(\a, \leq_\a)$. 
		}
		\label{tbl:ipsystem}
	\end{table}
}

As abstract interpretation usually provides an over-approximation for program analysis, a sound abstraction does not necessarily guarantee the soundness of the induced incorrectness logic. However, as the following theorem states, a complete abstract interpretation indeed guarantees that the induced incorrectness logic is both sound and relatively complete.

\begin{theorem}\label{thm:aitoihl}
	Let $\a$ be a well-structured abstract domain of quantum programs. If the abstract operator $\asem{e}$ is complete for each basic command $e$,  then the induced incorrectness logic system presented in Table~\ref{tbl:ipsystem} is both sound and relatively complete; that is, $$\vdash_{\rm{in}}\iass{\qassert}{S}{\qassertp}\qquad \mbox{iff} \qquad  \models_{\rm{in}}\iass{\qassert}{S}{\qassertp}$$ for any specification formula $\iass{\qassert}{S}{\qassertp}$.
\end{theorem}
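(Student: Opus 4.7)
The plan is to mimic the two-part structure of the proof of Theorem~\ref{thm:aitohl}, exploiting the fact that completeness of the basic-command abstractions propagates to every composite program by Theorem~\ref{thm:basicsuff}, so that $\asem{S}=\bsem{S}$ throughout. The inequality directions are then reversed relative to the Hoare case because incorrectness triples express under-approximations rather than over-approximations.

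For soundness, I would show that $\vdash_{\rm in}\iass{a}{S}{b}$ entails $b\leq_\a\bsem{S}(a)$ by induction on the length of the derivation. Rule (Exp-In) is immediate from $\asem{e}=\bsem{e}$; (Seq-In) combines monotonicity of $\bsem{S_1}$ with $\bsem{S_1}\circ\bsem{S_0}=\bsem{S_0;S_1}$ (closure of complete abstractions under composition); (Imp-In) is direct from monotonicity of $\bsem{S}$; and (Meas-In) uses Lemma~\ref{lem:spmw} to rewrite $\bsem{\pmstm}(a)$ as $\bsem{T_1}(a)\vee\bsem{T_0}(a)$, where $T_1=\assert P[\bar{q}];S_1$ and $T_0=\assert P^\bot[\bar{q}];S_0$. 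For (While-In), a sub-induction on $i$ yields $a_i\leq_\a(\bsem{T})^i(a_0)$, hence $b_i\leq_\a\bsem{T_0}((\bsem{T})^i(a_0))$ by monotonicity, so taking the join and invoking Lemma~\ref{lem:spmw} gives $\bigvee_i b_i\leq_\a\bsem{\pwstm}(a_0)$.

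For relative completeness, by (Imp-In) it suffices to derive $\vdash_{\rm in}\iass{a}{S}{\bsem{S}(a)}$ by induction on the structure of $S$. The atomic, sequential, and measurement cases follow the template of Theorem~\ref{thm:aitohl} with the inequality flipped, using the join identity for the branches provided by Lemma~\ref{lem:spmw} to match $\bsem{\pmstm}(a)$ exactly. The delicate case is the while loop, which I would handle by selecting the explicit witnesses $a_i\define(\bsem{T})^i(a)$ and $b_i\define\bsem{T_0}(a_i)$. The induction hypothesis then gives $\vdash_{\rm in}\iass{a_i}{T}{a_{i+1}}$ and $\vdash_{\rm in}\iass{a_i}{T_0}{b_i}$, so (While-In) produces $\vdash_{\rm in}\iass{a}{\pwstm}{\bigvee_i b_i}$, and it remains to apply (Imp-In) after identifying $\bigvee_i b_i$ with $\bsem{\pwstm}(a)$.

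The main obstacle is precisely this last identification: one must combine Lemma~\ref{lem:spmw}, which says $\bsem{\pwstm}(a)=\bigvee_i\bsem{T^i;T_0}(a)$, with the equality $\bsem{T^i;T_0}=\bsem{T_0}\circ(\bsem{T})^i$, which holds only because complete abstractions compose. This is the step at which the completeness hypothesis on the basic commands (not merely soundness) becomes indispensable, and it is also why a merely sound abstraction cannot be expected to induce a sound incorrectness logic: the join of under-approximations of each loop iteration must coincide with the loop's own best abstraction, not merely be dominated by it.
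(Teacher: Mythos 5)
Your proposal is correct and follows essentially the same route as the paper's proof: both directions proceed by the same inductions, with $\asem{S}=\bsem{S}$ secured via Theorem~\ref{thm:basicsuff}, the same witnesses $(\bsem{T})^i(a)$ for the loop in the completeness direction, and the same final identification of $\bigvee_i\bsem{T_0}\circ(\bsem{T})^i(a)$ with $\bsem{\pwstm}(a)$ via Lemma~\ref{lem:spmw} and closure of complete abstractions under composition. Your closing observation about why completeness (rather than mere soundness) is indispensable also matches the paper's remark preceding the theorem.
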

\begin{proof}
	First, from Theorem~\ref{thm:basicsuff}, if $\asem{e}$ is complete for each basic command $e$, then $\asem{S}$ is complete for any program $S$. Thus $\asem{S} = \bsem{S}$.

	To prove the soundness, we have to show that whenever $\vdash_{\text{in}} \iass{\qassert}{S}{\qassertp}$, then $b\leq_\a \bsem{S}(a)$. This can be proved by induction on the proof length of $\vdash_{\text{in}} \iass{\qassert}{S}{\qassertp}$. For the last step of the proof, we have the following cases:
	\begin{enumerate}
		\item Rule (Exp-In) is used. In this case, $S\equiv e$ for some $e \in \{\sskip,\ \bar{q} := |0\>,\ \bar{q}\apply U,\ \assert{P[\bar{q}]}\}$, and $\qassertp \equiv \bsem{e}(\qassert)$. The result then trivially follows. 
		
				\item Rule (Seq-In) is used. This follows from the fact that complete abstractions are closed under operator composition; that is, $\bsem{S_0;S_1} = \bsem{S_1}\circ\bsem{S_0}$.
		
		\item Rule (Imp-In) is used. Then the result follows from the monotonicity of $\bsem{S}$ for all program $S$.
		
		\item Rule (Meas-In) is used. Let $T_1 = \assert P[\bar{q}]; S_1$ and $T_0= \assert P^\bot[\bar{q}]; S_0$. By induction, we have $b_i \leq_\a \bsem{ T_i}(\qassert)$ for $i=0,1$. Thus from Lemma~\ref{lem:spmw}, 
		\[
		b_0\vee b_1\leq_\a
		\bsem{ T_0}(\qassert)\vee \bsem{ T_1}(\qassert)=\bsem{\pmstm}(a).
		\]
		
		\item Rule (While-In) is used. Let $T = \assert P[\bar{q}]; S$ and $T_0 = \assert P^\bot[\bar{q}]$. From induction hypothesis, we have for any $i\geq 0$,
		$
		a_{i+1} \leq_\a \bsem{T}(a_i)$ and  $\qassertp_i\leq_\a \bsem{T_0}(a_i).$
		By induction on $i$ we can show
		$a_i\leq_\a
		(\bsem{T})^i(a_0)$ for all $i\geq 0$.
		Thus from Lemma~\ref{lem:spmw},  
		\[
		\bigvee_{i\geq 0} b_i \leq_\a \bigvee_{i\geq 0} \bsem{T_0}(a_i) \leq_\a \bigvee_{i\geq 0} \bsem{T_0}\circ	(\bsem{T})^i(a_0) = \bsem{\pwstm}(a_0).
		\]
	\end{enumerate}

For the completeness part, we have to show that whenever $\qassertp\leq_\a \bsem{S}(\qassert)$, then $\vdash_{\text{in}} \iass{\qassert}{S}{\qassertp}$. 
First, from Rule (Imp-In), it suffices to show 
\begin{equation*}
\vdash_{\text{in}} \iass{\qassert}{S}{\bsem{S}(\qassert)}
\end{equation*}
by induction on the structure of $S$. The proof is similar to the corresponding part of Theorem~\ref{thm:aitohl} for Hoare logic, except for the while loop which we prove as follows.

%
%
Let $S\equiv \pwstm$, $T = \assert P[\bar{q}]; S$, and $T_0 = \assert P^\bot[\bar{q}]$. By induction, for any $i\geq 0$,
\[
\vdash_{\text{in}} \iass{(\bsem{T})^i(a)}{T}{(\bsem{T})^{i+1}(a)}, \quad \vdash_{\text{in}} \iass{(\bsem{T})^i(a)}{T_0}{\bsem{T_0}\circ (\bsem{T})^i(a)}.
\]
		Thus from (While-In),
\[
\vdash_{\text{in}} \iass{a}{S}{\bigvee_{i\geq 0} \bsem{T_0}\circ (\bsem{T})^i(a)},
\]		
and the result follows from	Lemma~\ref{lem:spmw} and  the fact that complete abstractions are closed under operator composition. \qedhere
\end{proof}

\begin{example}[Quantum incorrectness logic induced by subspace abstract interpretation]	\label{exm:inclogic}
		
	Again, as the subspace abstract interpretation defined Example~\ref{exm:soiscomplete} is complete, from  Theorems~\ref{thm:aitoihl} we know that the induced incorrectness logic system presented in Table~\ref{tbl:ipsystem} is both sound and complete when assertions are taken from $\shv$. 
				
	The incorrectness logic proposed in~\cite{yan2022incorrectness} also uses elements of $\shv$ as assertions. A specification formula
	$\iass{P}{S}{Q}$ is valid if for any $\rho\in \dhv$, whenever $P \subseteq \supp{\rho}$, $Q\subseteq\supp{\sem{S}(\rho)}$.
	It is easy to see that this is equivalent to $Q\leq_\a \bsem{S}(P)$ where the best abstraction $\bsem{S}$ is defined using the abstraction and concretisation functions presented in Example~\ref{exm:allspaces}, coinciding with our correctness definition. Again, the difference between our proof system and the one proposed in~\cite{yan2022incorrectness} is that the former works in a forward manner while the latter in a backward manner.
\end{example}

\subsection{Abstract interpretation induced by Hoare logic}
\label{subsec:hltoai}
Now we consider the reverse problems of deriving a quantum abstract interpretation from a quantum Hoare/incorrectness logic. Let $\a$ be a well-structured set of quantum assertions, and $\ps$ be a Hoare-type proof system (of partial correctness) for quantum programs, where the assertions are taken from $\a$.
Without loss of generality, we assume that $\ps$ consists of an axiom (schema) for each basic command and a proof rule for each program construct such as sequential composition, conditional branching, and while loop. Furthermore, it provides a consequence rule of the form
\[
	\displaystyle\frac{\qassert\leq_\a \qassert',\ \ass{\qassert'}{S}{\qassertp'},\ \qassertp'\leq_\a \qassertp}{\ass{\qassert}{S}{\qassertp}}
\]
for precondition strengthening and postcondition weakening. 
A correctness formula $\ass{\qassert}{S}{\qassert'}$ in $\ps$ is semantically valid, written $\models_{\ps} \ass{a}{S}{a'}$,  if for any  $\rho\in \dhv$, $\rho\models \qassert$ implies $\sem{S}(\rho)\models \qassert'$. It is derivable, written $\vdash_{\ps} \ass{\qassert}{S}{\qassert'}$, if it has a proof sequence in $\ps$. Then an abstract interpretation for quantum programs is naturally induced by $\ps$ as follows:
%
\begin{itemize}
	\item Take $\a$ to be the abstract domain of quantum states, and define the pair of abstraction-concretisation functions $(\alpha, \gamma)$ as in Lemma~\ref{lem:wsasserttoad}. Then $\a$ as an abstract domain is also well-structured.
\item For any quantum program $S$, let $\asem{S}: \a\ra \a$ with
\begin{equation}\label{eq:defasem}
	\asem{S}(a) = \bigwedge \{a'\in \a :\ \vdash_{\ps} \ass{a}{S}{a'}\}
\end{equation}
for any $a\in \a$ be the abstract operator of $\sem{S}$.
\end{itemize}


Thanks to the complete lattice structure of the assertion set $\a$, the strongest postconditions always exist in $\ps$. For any quantum program $S$ and $a\in \a$, let
\begin{equation}\label{eq:defspc}
	spc(S,a) \define \bigwedge \{a'\in \a :\ \models_{\ps} \ass{a}{S}{a'}\}.
\end{equation}
We now prove that $spc(S,a)$ is the strongest postcondition of $a$ with respect to $S$. Furthermore, it coincides with the best abstraction of $S$.
\begin{lemma}\label{lem:spost}
For any quantum program $S$ and $a\in \a$,
	\begin{enumerate}
		\item $\models_{\ps} \ass{a}{S}{spc(S,a)}$;
		\item for any $a'\in \a$, if $\models_{\ps} \ass{a}{S}{a'}$ then $spc(S,a)\leq_\a a'$;
		\item $spc(S,a) = \bsem{S}(a)$ where $\bsem{S}=\alpha\circ\sem{S}\circ\gamma$ is the best abstraction of $\sem{S}$. 
	\end{enumerate}
\end{lemma}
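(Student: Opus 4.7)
The plan is to prove the three clauses in order, leveraging the well-structuredness of $\a$ as an assertion set together with the Galois embedding $(\alpha,\gamma)$ induced via Lemma~\ref{lem:wsasserttoad}.

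For clause (2), the argument is essentially immediate from the definition of $spc(S,a)$ as a meet: any $a'$ with $\models_{\ps}\ass{a}{S}{a'}$ lies in the set indexing the meet in Eq.~\eqref{eq:defspc}, so $spc(S,a)\leq_\a a'$.

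For clause (1), I will fix $\rho\in\dhv$ with $\rho\models a$ and show $\sem{S}(\rho)\models spc(S,a)$. For every $a'$ in the indexing set $A_0 \define \{a'\in \a :\ \models_{\ps}\ass{a}{S}{a'}\}$ we have $\sem{S}(\rho)\models a'$ by definition of $\models_{\ps}$. The first condition in Definition~\ref{def:wsassert} (well-structuredness of $\a$ as an assertion set) then gives $\sem{S}(\rho)\models \bigwedge A_0 = spc(S,a)$, as required.

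For clause (3), I will prove two inequalities. First, to establish $spc(S,a)\leq_\a \bsem{S}(a)$, it suffices (by clause (2)) to show that $\bsem{S}(a)$ belongs to $A_0$. Take $\rho\in\gamma(a)$; then $\sem{S}(\rho)\in\sem{S}(\gamma(a))\subseteq\gamma(\alpha(\sem{S}(\gamma(a))))=\gamma(\bsem{S}(a))$, where the inclusion is the standard extensivity property of a Galois connection. Hence $\sem{S}(\rho)\models \bsem{S}(a)$, so indeed $\models_{\ps}\ass{a}{S}{\bsem{S}(a)}$. Conversely, for $\bsem{S}(a)\leq_\a spc(S,a)$, apply clause (1): for every $\rho\in\gamma(a)$, $\sem{S}(\rho)\in\gamma(spc(S,a))$, yielding $\sem{S}(\gamma(a))\subseteq\gamma(spc(S,a))$; the Galois connection then gives $\bsem{S}(a)=\alpha(\sem{S}(\gamma(a)))\leq_\a spc(S,a)$.

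The only real subtlety is in clause (1), where we must pass a pointwise satisfaction statement through an arbitrary meet in $\a$; this is precisely the role of the first condition in Definition~\ref{def:wsassert}, without which the meet $spc(S,a)$ could fail to be a genuine postcondition. All other steps reduce to the adjunction $R\subseteq\gamma(a)\Leftrightarrow\alpha(R)\leq_\a a$ and the bijective correspondence between $\rho\models a$ and $\rho\in\gamma(a)$ that the construction of $\gamma$ from $\models$ enforces.
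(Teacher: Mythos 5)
Your proposal is correct and follows essentially the same route as the paper: clause (2) directly from the meet in Eq.~\eqref{eq:defspc}, clause (1) via the first condition of Definition~\ref{def:wsassert} applied to the indexing set, and clause (3) by two inequalities using the Galois connection (the only cosmetic difference being that for $\bsem{S}(a)\leq_\a spc(S,a)$ you invoke clause (1) and the adjunction, whereas the paper observes that $\bsem{S}(a)$ is a lower bound of the indexing set and hence below its meet). No gaps.
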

\begin{proof}
	Firstly, clause (2) is easy from Eq.~\eqref{eq:defspc}. To prove (1), take any $\rho$ with $\rho\models a$ and $a'\in \a$. If $\models_{\ps} \ass{a}{S}{a'}$, then $\sem{S}(\rho)\models a'$. Thus $\sem{S}(\rho)\models spc(S,a)$ from the first condition of Definition~\ref{def:wsassert}. 

	To prove  (3), we first observe that $\models_{\ps} \ass{a}{S}{ \bsem{S}(a) }$. Thus $spc(S,a) \leq_\a \bsem{S}(a)$ from (2). For the converse part,
	take any $\qassert' \in \a$ with $\models_{\ps} \ass{a}{S}{a'}$. Then $\sem{S}\circ \gamma(\qassert) \subseteq \gamma(\qassert')$, which implies that $\bsem{S}(a)  \leq_\a \qassert'$. Thus $\bsem{S}(a)  \leq_\a spc(S,a)$ from the arbitrariness of $a'$.
\end{proof}

The following theorem gives a close relationship between a Hoare-type logic system and the abstract interpretation induced by it.

\begin{theorem}\label{thm:hltoai}
	Let $\a$ be a well-structured set of quantum assertions, and $\ps$ a Hoare-type logic system for quantum programs with assertions taken from $\a$.
	\begin{enumerate}
		\item If $\ps$ is sound, then the induced abstract interpretation is sound.
		\item If $\ps$ is sound and relatively complete, then the induced abstract interpretation is complete.
	\end{enumerate}
\end{theorem}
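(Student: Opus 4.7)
The proof mirrors Theorem~\ref{thm:aitohl}, but run in the opposite direction, with Lemma~\ref{lem:spost} acting as the bridge between the provability-based operator $\asem{S}(a) = \bigwedge\{a' : \vdash_\ps \ass{a}{S}{a'}\}$ and the semantic strongest postcondition $spc(S,a) = \bsem{S}(a)$.

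For part~(1), soundness of $\ps$ gives the inclusion $\{a' : \vdash_\ps \ass{a}{S}{a'}\} \subseteq \{a' : \models_\ps \ass{a}{S}{a'}\}$; taking infima in $\a$ yields $\asem{S}(a) \geq_\a spc(S,a) = \bsem{S}(a)$. The best abstraction $\bsem{S}$ is always sound, since $R \subseteq \gamma(\alpha(R))$ combined with the monotonicity of $\sem{S}$ and $\alpha$ gives $\alpha \circ \sem{S} \leq_\a \bsem{S} \circ \alpha$; hence the larger operator $\asem{S}$ is also sound.

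For part~(2), relative completeness promotes the above inclusion to an equality of sets, so by Lemma~\ref{lem:spost}(3), $\asem{S}(a) = \bsem{S}(a)$ for every $S$ and every $a \in \a$, and the induced abstract operator coincides with the best abstraction. What remains is to verify that $\bsem{S}$ is in fact a complete abstraction, i.e., $\alpha \circ \sem{S} = \bsem{S} \circ \alpha$. My plan is to establish this first for each basic command $e$, and then invoke Theorem~\ref{thm:basicsuff} to propagate completeness to every composite $S$; the inductively extended abstract operator must then coincide with $\bsem{S}$ because complete abstractions are unique.

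The main obstacle is the base case. Here I would exploit the well-structured property of $\a$: any $\rho \in \gamma(\alpha(R))$ satisfies $\alpha(\rho) \leq_\a \bigvee_{\sigma\in R}\alpha(\sigma)$ since $\alpha$ preserves joins as a left adjoint, and Lemma~\ref{lem:spbased} together with the linearity of each $\sem{e}$ as a CPTN super-operator should let one bound $\alpha(\sem{e}(\rho))$ by the corresponding join of $\alpha(\sem{e}(\sigma))$ over $\sigma \in R$, which equals $\alpha(\sem{e}(R))$. Converting this bound from the abstraction level back to the syntactic side uses the axiom for $e$ in $\ps$ together with the consequence rule and relative completeness, which certify that the required bridging triple is both derivable and semantically valid. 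The contrapositive of the resulting implication is then the engine for the advertised non-existence result: since the local-subspace example already exhibits a basic unitary whose best abstraction fails to be complete, no sound and relatively complete Hoare logic can exist when tuples of local subspaces are taken as assertions.
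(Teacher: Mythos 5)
Part (1) of your proposal is correct and is exactly the paper's argument: soundness of $\ps$ gives $\bsem{S}(a)=spc(S,a)\leq_\a\asem{S}(a)$, and anything above the best abstraction is sound. For part (2), your first move --- relative completeness forces $\asem{S}=\bsem{S}$ via Lemma~\ref{lem:spost}(3) --- also matches the paper. The gap is in what remains. You propose to prove that $\bsem{e}$ is a \emph{complete} abstraction for each basic command $e$ using only the well-structuredness of $\a$ and the linearity of $\sem{e}$, and then to propagate via Theorem~\ref{thm:basicsuff}. That base case is false as stated: the local-subspace domain $\shv_\sigma$ of Example~\ref{exm:yuai} is well-structured, yet the paper exhibits a single unitary command whose best abstraction is not complete (one has $\Phi^-\in\gamma_\sigma(\alpha_\sigma(\Phi^+))$ but $\alpha_\sigma(\sem{S}(\Phi^-))\not\leq_\sigma\alpha_\sigma(\sem{S}(\Phi^+))$). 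In general, $\alpha(\rho)\leq_\a\bigvee_{\sigma\in R}\alpha(\sigma)$ does \emph{not} yield $\alpha(\sem{e}(\rho))\leq_\a\bigvee_{\sigma\in R}\alpha(\sem{e}(\sigma))$; Lemma~\ref{lem:spbased} controls $\gamma(a)$ under convex combinations, scaling and domination, but $\gamma(\alpha(R))$ can contain states that are not obtained from members of $R$ in any of these ways. Your own closing remark --- that a basic unitary already has an incomplete best abstraction over $\shv_\sigma$ --- contradicts your base-case strategy. Nor does the appeal to ``the axiom for $e$ plus the consequence rule plus relative completeness'' repair it: relative completeness has already been spent identifying $\asem{e}$ with $\bsem{e}$, and says nothing further about how $\bsem{e}$ commutes with $\alpha$ on arbitrary sets of states.

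The paper's proof of part (2) is global rather than command-by-command, and it is precisely here that the hypothesis on $\ps$ does extra work. It proceeds in four steps: (i) $\asem{S_1;S_2}=\asem{S_2}\circ\asem{S_1}$, derived from the assumption that $\ps$ has a single proof rule for sequential composition together with the consequence rule; (ii) Claim~\ref{clm:settostate}: for any $R\subseteq\dhv$ there is a single mixed state $\rho_R$ (a finite uniform mixture, using finite-dimensionality, well-structuredness and Lemma~\ref{lem:tmp}) with $\alpha(\sem{S}(\rho_R))=\alpha(\sem{S}(R))$ for \emph{all} programs $S$; (iii) Claim~\ref{clm:singlestate}: the language is expressive enough to contain a program $S^{\rho}$ with $\sem{S^{\rho}}(\sigma)=\tr(\sigma)\cdot\rho$; and (iv) chaining these, $\alpha\circ\sem{S}(R)=\alpha\circ\sem{S^{\rho_R};S}\circ\gamma(a)=\asem{S}\circ\asem{S^{\rho_R}}(a)=\asem{S}\circ\alpha(R)$, which is the desired completeness. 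Steps (ii)--(iv), i.e.\ the reduction of sets to single states and the state-preparation programs, are the ideas your proposal is missing, and without them the argument does not go through.
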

\begin{proof}
For any $a\in \a$ and quantum program $S$, let 
$$F_{\vdash}(S,a) \define \{a'\in \a :\ \vdash_{\ps} \ass{a}{S}{a'}\}\quad \mbox{and}\quad 
F_{\models}(S,a) \define \{a'\in \a :\ \models_{\ps} \ass{a}{S}{a'}\}.$$ 
Now to prove (1), note that if $\ps$ is sound then $F_{\vdash}(S,a) \subseteq F_{\models}(S,a)$. Thus  $\bsem{S}(a)\leq_\a \asem{S}(a)$ by Lemma~\ref{lem:spost}(3), which implies that the abstraction $\asem{S}$ is sound.

To prove (2), note that if $\ps$ is relatively complete then $F_{\models}(S,a) \subseteq F_{\vdash}(S,a)$. Thus  $\asem{S}(a)\leq_\a\bsem{S}(a)$, which, together with (1), implies that $\asem{S}$ is the best abstraction of $S$. The rest of the proof consists of four steps.

\begin{itemize}
	\item[(i)] As the first step, we prove that for any $S_2$ and $S_1$, 
	\[
	\asem{S_1; S_2} = \asem{S_2}\circ \asem{S_1}.
	\]
	For any $a\in \a$, we know from the completeness of $\ps$ that
	$\vdash_{\ps} \ass{a}{S_1;S_2}{\asem{S_1; S_2}(a)}$. Furthermore,
	as there is only one proof rule for sequential composition, there must exist
	$a_1, b\in \a$ such that
	\[
	a\leq_\a a_1,\quad \vdash_{\ps}\ass{a_1}{S_1}{b},\quad \vdash_{\ps} \ass{b}{S_2}{a'},\quad a'\leq_\a \asem{S_1; S_2}(a).
	\]
	Then 
	\begin{align*}
		\asem{S_2}\circ \asem{S_1}(a) & \leq_\a \asem{S_2}\circ \asem{S_1}(a_1)\\
		&\leq_\a \asem{S_2}(b)\\
		&\leq_\a a' \leq_\a \asem{S_1; S_2}(a).
	\end{align*}
	The other direction that $\asem{S_1; S_2}(a)\leq_\a \asem{S_2}\circ \asem{S_1}(a)$ is easy from the fact $\gamma\circ \alpha\geq_{\qconc} \mathrm{id}_{\qconc}$.
	
	\item[(ii)] The next step is to show that for any set $R$ of quantum states, we can always find a single state (not necessarily in $R$) which shares the same abstraction with $R$. Specifically, we have
	\begin{clm}\label{clm:settostate}
		For any $R\subseteq \dhv$,  there exists a single state $\rho_R\in \dhv$ with $\tr(\rho_R) = 1$ such that 
		\[\alpha(\sem{S}(\rho_R)) = \alpha(\sem{S}(R))\] for all quantum program $S$. In particular, $\alpha(\rho_R) = \alpha(R)$.
	\end{clm}
	\begin{proof}[Proof of Claim \ref{clm:settostate}]
		As $\h_V$ is finite dimensional, we can always find a set of states $\rho_1, \ldots, \rho_n$ in $R$, $\tr(\rho_i)=1$, such that 
		\[
			\alpha_s(R) = \bigvee \left\{\supp{\rho} : \rho \in R\right\} = \bigvee_{i=1}^n \supp{\rho_i}.
		\]
		Let $\rho_R = \sum_{i=1}^n \rho_i/n$. Then for any $S$, 
		\[
		\alpha_s(\sem{S}(\rho_R)) = \alpha_s\left(\sum_{i=1}^n\frac{1}{n} \sem{S}(\rho_i)\right) = \bigvee_{i=1}^n \alpha_s(\sem{S}(\rho_i)) = \alpha_s(\sem{S}(R)).
		\] 
		Then $\alpha(\sem{S}(\rho_R)) = \alpha(\sem{S}(R))$ from Lemma~\ref{lem:tmp}.
	\end{proof}
\item[(iii)]  The third step is to show that our while-language is powerful enough to prepare an arbitrarily given state. Specifically, we have
\begin{clm}\label{clm:singlestate}
	For any $\rho\in \dhv$ with $\tr(\rho) = 1$, there exists a quantum program $S^\rho$ which turns any quantum state into state $\rho$; that is, for all $\sigma\in \dhv$,
	\[
	\sem{S^\rho}(\sigma) = \tr(\sigma)\cdot \rho.
	\] 
	Consequently, for all $R\subseteq \dhv$ and program $S$, \[
	\alpha(\sem{S^\rho;S}(R)) =  \alpha(\sem{S}(\rho))
	\]
	whenever $R\neq \{0\}$. 
\end{clm}
\begin{proof}[Proof of Claim \ref{clm:singlestate}]
	Let $\rho = \sum_{i=0}^{d-1} \lambda_i |\psi_i\>\<\psi_i|$, $d=2^{|V|}$, be the spectral decomposition of $\rho$ where $\{|\psi_i\> : i=0,\ldots, d-1\}$ constitute some orthonormal basis of $\h_V$. 
	Let $|\psi\> =  \sum_{i=0}^{d-1} \sqrt{\lambda_i} |\psi_i\>$. From the assumption that $\tr(\rho) = 1$ we know that $|\psi\>$ is a valid quantum (pure) state. Thus we can find a unitary operator $U$ on $\h_V$ such that $U|0\> = |\psi\>$. Let 
	\begin{align*}
		S^{\rho} \define &\\
		& \bar{q} := |0\>; \ \bar{q} \apply U;\\	
		&\iif\ P_0[\bar{q}]\ \then\\
		&\qquad  \sskip;\\
		&\eelse\ \iif\ P_1[\bar{q}]\ \then\\
		&\qquad\qquad  \sskip;\\	
		&\qquad\eelse\\
		&\qquad\qquad \ldots \\
		& \qquad\pend\\
		& \pend
	\end{align*} 
	where $\bar{q}=V$, and for any $i=0, \ldots, d-1$, $P_i = |\psi_i\>\<\psi_i|$. Note that for any $\sigma\in \dhv$, $$\sem{\bar{q} := |0\>}(\sigma) = \tr(\sigma)\cdot |0\>\<0|.$$
	Then it is easy to show that $\sem{S^\rho}(\sigma) = \tr(\sigma)\cdot \rho$. The last part of the claim follows from the linearity of $\sem{S}$.
\end{proof}

\item[(iv)] Now for any quantum program $S$, $R\subseteq \dhv$, and $a\in \a$ with $\gamma(a) \neq \{0\}$, 
\begin{align*}
	\alpha\circ \sem{S}( R)	& =\alpha (\sem{S}( \rho_R)) \hspace{11em} \mbox{Claim~\ref{clm:settostate}}\\
	&=\alpha\circ \sem{S^{\rho_R};S}\circ \gamma(a) \hspace{8em} \mbox{Claim~\ref{clm:singlestate}}\\
	&=\asem{S^{\rho_R};S}(a) \hspace{3.5em} \mbox{$\asem{\cdot}$ is the best abstraction}\\
	&=\asem{S}\circ \asem{S^{\rho_R}}(a) \hspace{9.5em} \mbox{Step (i)}\\
	&=\asem{S} \circ \alpha\circ \sem{S^{\rho_R}} \circ \gamma(a) \\
	&=\asem{S} \circ \alpha(\rho_R)   \hspace{10.5em} \mbox{Claim~\ref{clm:singlestate}}\\
	&=\asem{S} \circ \alpha(R).   \hspace{10.8em} \mbox{Claim~\ref{clm:settostate}}
\end{align*}
Thus $\alpha\circ \sem{S} =  \asem{S}\circ\alpha$ as desired. \qedhere
\end{itemize}

\end{proof}

\begin{example}[Abstract interpretation induced by applied quantum Hoare logic]\label{exm:aibyahl}
	Recall from Example~\ref{exm:applied} that the applied quantum Hoare logic~\cite{zhou2019applied}, which is both sound and relatively complete, uses elements of $\shv$ as assertions. Thus by Theorems~\ref{thm:hltoai}, the induced abstract interpretation on subspace domain is complete, where the abstraction and concretisation functions are defined in Lemma~\ref{lem:wsasserttoad} and the abstract operators defined in Eq.\eqref{eq:defasem}. It is easy to check that this induced abstract interpretation for quantum programs is exactly the one defined in Example~\ref{exm:soiscomplete}.
\end{example}	

\begin{example}\label{exm:noexisthl} Let us revisit Example~\ref{exm:yuai}. We have already shown that although the  local-subspace abstract domain $\shv_\sigma $ is well-structured for any signature $\sigma$, 
	the best abstraction $\sem{S}^b_\sigma$ is in general not complete for $\sem{S}$.
	Thus from Theorem~\ref{thm:hltoai}, it is impossible to have a sound and relatively complete Hoare-type logic system with elements in $\shv_\sigma$ being taken as assertions. 
\end{example}

\subsection{Abstract interpretation induced by incorrectness logic}
\label{subsec:iltoai}
Let $\a$ be a well-structured set of quantum assertions, and $\is$ an incorrectness logic system for quantum programs, where the assertions are taken from $\a$. Similar to the case of Hoare logic in the last subsection, we assume that $\is$ consists of an axiom (schema) for each basic command and a proof rule for each program construct, except that the consequence rule is now of the form
\[
\displaystyle\frac{\qassert\leq_\a \qassert',\ \iass{\qassert}{S}{\qassertp},\ \qassertp'\leq_\a \qassertp}{\iass{\qassert'}{S}{\qassertp'}}
\]
for precondition weakening and postcondition strengthening. 
A correctness formula $\iass{\qassert}{S}{\qassert'}$ in $\is$ is semantically valid, written $\models_{\is} \iass{a}{S}{a'}$,  if
\begin{equation}\label{eq:defin}
\qassert' \leq_{\a} \bigwedge \{b\in \a : \forall \rho\models \qassert, \sem{S}(\rho)\models b\}.	
\end{equation}
Note that the right-hand side of Eq.\eqref{eq:defin} denotes the strongest assertion which is satisfied by all reachable states starting from some state satisfying $a$, and $\qassert'$ provides an under-approximation for it. The formula $\iass{\qassert}{S}{\qassert'}$ is derivable, written $\vdash_{\is} \iass{\qassert}{S}{\qassert'}$, if it has a proof sequence in $\is$. 

An abstract interpretation for quantum programs is then naturally induced by $\is$ as follows:
\begin{itemize}
	\item Take $\a$ to be the abstract domain of quantum states, and define the pair of abstraction-concretisation functions $(\alpha, \gamma)$ as in Lemma~\ref{lem:wsasserttoad}. Then $\a$ as an abstract domain is also well-structured.
	\item For any quantum program $S$, let $\asem{S}: \a\ra \a$ with
	\begin{equation}\label{eq:definasem}
		\asem{S}(a) = \bigvee \{a'\in \a :\ \vdash_{\is} \iass{a}{S}{a'}\}
	\end{equation}
	for any $a\in \a$ be the abstract operator of $\sem{S}$.
\end{itemize}

For any quantum program $S$ and $a\in \a$, let
\begin{equation}\label{eq:defwpc}
	wpc(S,a) \define \bigvee \{a'\in \a :\ \models_{\is} \iass{a}{S}{a'}\}.
\end{equation}
We now prove that $wpc(S,a)$ is the weakest postcondition of $a$ with respect to $S$, and it is exactly the best abstraction of $S$.
\begin{lemma}\label{lem:wpost}
	For any quantum program $S$ and $a\in \a$,
	\begin{enumerate}
		\item $\models_{\is} \iass{a}{S}{wpc(S,a)}$;
		\item for any $a'\in \a$, if $\models_{\is} \iass{a}{S}{a'}$ then $a'\leq_\a wpc(S,a)$;
		\item $wpc(S,a) = \bsem{S}(a)$. 
	\end{enumerate}
\end{lemma}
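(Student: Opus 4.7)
The plan is to observe that the three clauses collapse, via the complete–lattice structure of $\a$, to a single identification of $wpc(S,a)$ with both the semantic ``strongest postcondition'' and the best abstraction $\bsem{S}(a)$. Concretely, introduce the auxiliary element
\[
SA(S,a) \define \bigwedge\{b\in \a : \forall \rho\models a,\ \sem{S}(\rho)\models b\}.
\]
By the very definition in Eq.~\eqref{eq:defin}, $\models_{\is}\iass{a}{S}{a'}$ is equivalent to $a'\leq_\a SA(S,a)$, so
\[
wpc(S,a) \;=\; \bigvee\{a'\in \a : a'\leq_\a SA(S,a)\} \;=\; SA(S,a),
\]
the last equality because $SA(S,a)$ is itself in the set being joined and is an upper bound of it. This single identity drives the whole proof.

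With that identity in hand, clause (2) is immediate, since $a'\leq_\a wpc(S,a)$ is just the defining property of the join in Eq.~\eqref{eq:defwpc}. Clause (1) is also immediate: $wpc(S,a)=SA(S,a)\leq_\a SA(S,a)$ means $\models_{\is}\iass{a}{S}{wpc(S,a)}$, so the maximum is attained.

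For clause (3), I would unfold the best abstraction using the concretisation–abstraction pair supplied by Lemma~\ref{lem:wsasserttoad}. Let $R\define \sem{S}(\gamma(a))$. Then
\begin{align*}
\bsem{S}(a) \;=\; \alpha(\sem{S}(\gamma(a))) \;=\; \alpha(R) \;=\; \bigwedge\{b\in \a : R\subseteq \gamma(b)\},
\end{align*}
where the last equality is the definition of $\alpha$ from Lemma~\ref{lem:wsasserttoad}. Now $R\subseteq \gamma(b)$ unravels to ``for every $\rho\in\gamma(a)$, $\sem{S}(\rho)\in\gamma(b)$,'' which, by the definition of $\models$ via $\gamma$, is exactly ``$\forall\rho\models a,\ \sem{S}(\rho)\models b$.'' Hence $\bsem{S}(a) = SA(S,a) = wpc(S,a)$.

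The only non-routine point is justifying the join–collapse $wpc(S,a)=SA(S,a)$: it requires that $SA(S,a)$ itself belongs to the set being joined, which is where the well-structuredness of $\a$ as an assertion set enters (condition (1) of Definition~\ref{def:wsassert}) — specifically, that $\sem{S}(\rho)$ satisfies the meet $SA(S,a)$ whenever it satisfies each $b$ in the indexed family. Once that is explicitly noted, the rest of the argument is a short chain of rewriting, parallel in spirit to Lemma~\ref{lem:spost} but symmetric in direction, reflecting the under- rather than over-approximation nature of incorrectness logic.
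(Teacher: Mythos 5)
Your proof is correct and takes essentially the same route as the paper's: both reduce all three clauses to the identification of the meet on the right-hand side of Eq.~\eqref{eq:defin} with $\bsem{S}(a)$, obtained by unfolding $\alpha\circ\sem{S}\circ\gamma$ via the definitions in Lemma~\ref{lem:wsasserttoad}. One small correction to your closing remark: the join-collapse $wpc(S,a)=SA(S,a)$ follows from reflexivity of $\leq_\a$ alone, because Eq.~\eqref{eq:defin} already states validity as the order relation $a'\leq_\a SA(S,a)$; condition (1) of Definition~\ref{def:wsassert} is not needed here (it is what is needed in the Hoare-logic counterpart, Lemma~\ref{lem:spost}(1), where validity is phrased pointwise).
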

\begin{proof}
	Note that from the definitions of $\alpha$ and $\gamma$ in Lemma~\ref{lem:wsasserttoad}, we have
	\begin{align*}
		\bsem{S}(a) & = \alpha\circ\sem{S}\circ\gamma(a)\\
		& = \bigwedge\left\{b\in \a: \sem{S}\circ\gamma(a)\subseteq \gamma(b)\right\} \\
		& = \bigwedge\left\{b\in \a:  \forall \rho\in \gamma(\qassert), \sem{S}(\rho)\in \gamma(b)\right\} \\
		& = \bigwedge\left\{b\in \a:  \forall \rho\models \qassert, \sem{S}(\rho)\models b\right\}.
	\end{align*}
	Thus
	 $\models_{\is} \iass{a}{S}{a'}$ iff $a'\leq_{\a} \bsem{S}(a)$, and so $wpc(S,a) = \bsem{S}(a)$ from Eq.~\eqref{eq:defwpc}. Finally, (1) and (2) follow from (3) directly.
\end{proof}

The following theorem gives a close relationship between a quantum incorrectness logic system and the abstract interpretation induced by it.

\begin{theorem}\label{thm:iltoai}
	Let $\a$ be a well-structured set of quantum assertions, and $\is$ an  incorrectness logic system for quantum programs with assertions taken from $\a$. If $\is$ is sound and relatively complete, then the induced abstract interpretation is complete.
\end{theorem}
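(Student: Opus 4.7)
The plan is to mirror the structure of Theorem~\ref{thm:hltoai}(2), using Lemma~\ref{lem:wpost} in place of Lemma~\ref{lem:spost}. The crucial pivot is to first show that soundness and relative completeness of $\is$ together force $\asem{S}(a) = \bsem{S}(a)$ for every program $S$ and every $a\in\a$. On the one hand, soundness gives $\vdash_{\is}\iass{a}{S}{a'} \Rightarrow \models_{\is}\iass{a}{S}{a'}$, and by Lemma~\ref{lem:wpost}(3) every such $a'$ satisfies $a'\leq_\a \bsem{S}(a)$; taking joins in Eq.~\eqref{eq:definasem} yields $\asem{S}(a)\leq_\a \bsem{S}(a)$. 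On the other hand, Lemma~\ref{lem:wpost}(1) asserts $\models_{\is}\iass{a}{S}{\bsem{S}(a)}$, so by relative completeness $\vdash_{\is}\iass{a}{S}{\bsem{S}(a)}$, whence $\bsem{S}(a)\leq_\a \asem{S}(a)$. Thus $\asem{S}$ is exactly the best abstraction of $\sem{S}$ in $\a$.

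Next, I would reproduce the composition step of Theorem~\ref{thm:hltoai}(2): using the sole proof rule (Seq-In) together with the consequence rule (Imp-In), one shows $\asem{S_1;S_2} = \asem{S_2}\circ\asem{S_1}$. The forward inequality uses that any derivable triple for $S_1;S_2$ factors through some intermediate assertion, and the reverse follows from $\gamma\circ\alpha \geq_{\qconc} \mathrm{id}_{\qconc}$ together with monotonicity of $\bsem{\cdot}$. The argument is identical in structure to the Hoare-logic case; only the direction of the inequalities in the consequence rule differs, and this does not affect the final equality.

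With the best-abstraction identity and composition in hand, the remaining steps are purely semantic and reusable verbatim. Claim~\ref{clm:settostate} provides a single normalised state $\rho_R$ with $\alpha(\sem{S}(\rho_R)) = \alpha(\sem{S}(R))$ for every $S$ and every $R\subseteq\dhv$, and Claim~\ref{clm:singlestate} exhibits, for each normalised $\rho$, a state-preparation program $S^\rho$ which sends every input to a scalar multiple of $\rho$. Neither claim mentions the logic system, so they apply unchanged. Plugging them into the same chain of equalities used at the end of the proof of Theorem~\ref{thm:hltoai}(2) yields $\alpha\circ\sem{S}(R) = \asem{S}\circ\alpha(R)$ for every $R$ with $\gamma(\alpha(R))\neq\{0\}$; the degenerate case $\gamma(a)=\{0\}$ is handled trivially by monotonicity and the fact that $\asem{S}$ agrees with $\bsem{S}$ on $\bot$.

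The main obstacle, as in the Hoare case, will be the gap between \emph{best} abstraction and \emph{complete} abstraction: best abstraction alone does not imply $\alpha\circ\sem{S} = \asem{S}\circ\alpha$ pointwise, because the natural inequality runs in only one direction. The leverage comes from Claims~\ref{clm:settostate} and \ref{clm:singlestate}: they let us substitute an arbitrary input set $R$ by a concrete state, then encode that state as a program, so that the best-abstraction identity on programs transfers to a complete-abstraction identity on sets. Care must also be taken with the consequence rule's reversed direction in $\is$ when proving the composition lemma, since preconditions are weakened (not strengthened) and postconditions strengthened (not weakened); this affects intermediate bookkeeping but not the final equality.
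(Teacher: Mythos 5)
Your proposal follows the paper's proof essentially verbatim: establish $\asem{S}=\bsem{S}$ for every $S$ from soundness and relative completeness via Lemma~\ref{lem:wpost}, prove $\asem{S_1;S_2}=\asem{S_2}\circ\asem{S_1}$, and then reuse Claims~\ref{clm:settostate} and~\ref{clm:singlestate} together with the concluding chain of equalities from Theorem~\ref{thm:hltoai}. One small slip in your sketch of the composition step: since $\asem{S}(a)$ is here a \emph{join} of derivable postconditions, the inequality $\asem{S_2}\circ\asem{S_1}(a)\leq_\a\asem{S_1;S_2}(a)$ is obtained directly by deriving $\iass{a}{S_1;S_2}{\asem{S_2}\circ\asem{S_1}(a)}$ with (Seq-In), whereas the factoring-through-an-intermediate-assertion argument yields the \emph{opposite} inequality --- the roles of the two techniques are swapped relative to the Hoare case --- but this bookkeeping issue, which you flag yourself, does not affect the validity of your plan.
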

\begin{proof}
	For any $a\in \a$ and quantum program $S$, let 
	$$G_{\vdash}(S,a) \define \{a'\in \a :\ \vdash_{\is} \iass{a}{S}{a'}\}\quad \mbox{and}\quad 
	G_{\models}(S,a) \define \{a'\in \a :\ \models_{\is} \iass{a}{S}{a'}\}.$$ 
	As $\is$ is sound and relatively complete, we have $G_{\vdash}(S,a) = G_{\models}(S,a)$. Thus $\asem{S}(a) = \bsem{S}(a)$ from Lemma~\ref{lem:wpost}(3), and so $\asem{S}$ is the best abstraction of $S$. 
	
	Let $S_1$ and $S_2$ be two quantum programs.
	From Lemma~\ref{lem:wpost} and completeness of $\is$, we have for all $a\in \a$,
			\[
	\vdash_{\is}\iass{a}{S_1}{\asem{S_1}(a)}\quad \mbox{and}\quad \vdash_{\is} \iass{\asem{S_1}(a)}{S_2}{\asem{S_2}\circ \asem{S_1}(a)},
	\]
	and so $\vdash_{\is}\iass{a}{S_1;S_2}{\asem{S_2}\circ \asem{S_1}(a)}$ from the rule for sequential composition. Thus
		$\asem{S_2}\circ \asem{S_1}(a)\leq_{\a} \asem{S_1;S_2}(a)$ from Eq.~\eqref{eq:definasem}. However, from the fact that $\gamma\circ \alpha\geq_{\qconc} \mathrm{id}_{\qconc}$, we know that $\asem{S_1; S_2}(a)\leq_\a \asem{S_2}\circ \asem{S_1}(a)$. Thus
		\[
		\asem{S_1; S_2} = \asem{S_2}\circ \asem{S_1}.
		\]
The rest of the proof is the same as in Theorem~\ref{thm:hltoai} (starting from the second step).
\end{proof}

	\begin{example}[Abstract interpretation induced by quantum incorrectness logic]
		Recall from Example~\ref{exm:inclogic} that the quantum incorrectness logic presented in~\cite{yan2022incorrectness} is both sound and relatively complete. Thus by Theorems~\ref{thm:iltoai}, the induced abstract interpretation on subspace domain is complete, where the abstraction and concretisation functions are defined in Lemma~\ref{lem:wsasserttoad} and the abstract operators defined in Eq.\eqref{eq:definasem}. It is easy to check that this induced abstract interpretation for quantum programs is also the one defined in Example~\ref{exm:soiscomplete}.
	\end{example}	

Similar to Example~\ref{exm:noexisthl}, Theorem~\ref{thm:iltoai} also implies that it is impossible to have a sound and relatively complete incorrectness logic system with elements in $\shv_\sigma$ being taken as assertions. 

To conclude this section, we note the following corollary, which can be directly shown from Theorems~\ref{thm:aitohl}, \ref{thm:aitoihl}, \ref{thm:hltoai}, and \ref{thm:iltoai}.
	\begin{corollary}
		Let $\a$ be a well-structured set of assertions for quantum states. The following two statements are equivalent:
		\begin{enumerate}
			\item there exists a sound and relatively complete quantum Hoare logic system; 
			\item there exists a sound and relatively complete quantum incorrectness logic system.
		\end{enumerate} 
	\end{corollary}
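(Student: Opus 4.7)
My plan is to prove the corollary as an immediate two-line consequence of the four main theorems of the section, using the abstract interpretation induced by a well-structured assertion set as a bridge between the two kinds of logic systems. The strategy is strictly symmetric in the two directions and introduces no new ingredients.

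For $(1)\Rightarrow(2)$, I would assume that a sound and relatively complete quantum Hoare logic $\ps$ uses $\a$ as its assertion set. By Lemma~\ref{lem:wsasserttoad}, $\a$ acquires the structure of a well-structured abstract domain via the Galois embedding $(\alpha,\gamma)$ defined there. Applying Theorem~\ref{thm:hltoai}(2) yields a complete abstract interpretation $\amap{\cdot}$ on $\a$, which, being complete on every program, is in particular complete on every basic command $e$. Feeding this complete abstract interpretation into Theorem~\ref{thm:aitoihl} then produces a sound and relatively complete incorrectness logic over the same well-structured assertion set $\a$.

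For $(2)\Rightarrow(1)$, I would argue symmetrically: starting from a sound and relatively complete incorrectness logic $\is$, I would invoke Theorem~\ref{thm:iltoai} to obtain a complete abstract interpretation on $\a$, and then apply Theorem~\ref{thm:aitohl}(2) to synthesise a sound and relatively complete Hoare logic on $\a$.

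The only conceptual point that demands verification is that ``well-structuredness'' of $\a$ is preserved along the round trip between assertion sets and abstract domains, so that the hypotheses of the next theorem in the chain are indeed met; this is already witnessed by Lemmas~\ref{lem:wsasserttoad} and \ref{lem:wsadtoassert} and the remark recorded in the example following Lemma~\ref{lem:wsasserttoad}. Consequently, I do not anticipate any genuine obstacle: the corollary is structural rather than technical, essentially a diagram chase among the four cited theorems, and the abstract interpretation serves as a canonical common denominator through which the two logic systems are interderivable.
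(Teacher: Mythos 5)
Your proof is correct and follows exactly the route the paper intends: the paper itself states that the corollary "can be directly shown from Theorems~\ref{thm:aitohl}, \ref{thm:aitoihl}, \ref{thm:hltoai}, and \ref{thm:iltoai}," and your chaining (Hoare logic $\to$ complete abstract interpretation via Theorem~\ref{thm:hltoai}(2) $\to$ incorrectness logic via Theorem~\ref{thm:aitoihl}, and symmetrically via Theorems~\ref{thm:iltoai} and \ref{thm:aitohl}(2)) is precisely that argument, with the well-structuredness transfer correctly justified by Lemmas~\ref{lem:wsasserttoad} and \ref{lem:wsadtoassert}.
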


\section{Conclusion}
\label{sec:con}

We have shown a close relationship between abstract interpretation and Hoare/incorrectness logic for quantum programs, when the abstract domain and the set of assertions for quantum states are well-structured. With this relationship, we obtain sound and relatively complete Hoare logic and incorrectness logic for quantum programs. The induced logic systems are in a forward manner, complementing the (backward) applied quantum Hoare logic and incorrectness logic proposed in the literature. Conversely, our result also implies the non-existence of any sound and relatively complete Hoare or incorrectness logic for quantum programs if tuples of local subspaces are taken as assertions for quantum states.  
 
For future work, we plan to consider quantitative assertions where a quantum state satisfies a property with some degree (a number in $[0,1]$). Natural candidates for such assertions are hermitian operators between 0 and the identity, as widely used in the expectation-based quantum Hoare logics. To this end, we have to first establish a theory of abstract interpretation when such hermitian-operator assertions are regarded as abstraction of quantum states.

		\bibliography{ref}

\end{document}